\documentclass[english,10pt,a4paper]{article}
\usepackage{savesym}
\usepackage{amsmath}
\usepackage{cases}
\usepackage{amssymb, amsthm}
\usepackage[T1]{fontenc}
\usepackage[ansinew]{inputenc}
\usepackage{enumitem}
\usepackage{graphicx}
\usepackage{xspace}
\usepackage{subfigure}
\usepackage{color}
\usepackage[cc]{titlepic}
\usepackage{tabularx}
\usepackage[titletoc,title]{appendix}
\usepackage{multirow}
\usepackage{caption}
\usepackage{titlesec}
\usepackage{url}
\usepackage{mwe}
\usepackage[textfont=it]{caption}
\usepackage{hyperref}
\hypersetup{colorlinks   = true,citecolor    = blue}
\usepackage[numbers,sort]{natbib}
\usepackage{bm}
\usepackage{paralist}

\newtheorem{lem}{Lemma}

\makeatletter

\newcommand{\RZ}{\mathbb{R}}

\newcommand{\diag}{\mathop{\rm diag}\nolimits}
\newcommand{\argmin}{\mathop{\rm argmin}\limits}

\mathchardef\given="626A
\newcommand{\thalf}{{\textstyle\frac12}}
\newcommand{\bbeta}{\mathbf{\beta}}

\newcommand{\bgamma}{\bm{\gamma}}

\newcommand{\Y}{\mathbf{Y}}
\newcommand{\X}{\mathbf{X}}

\titleformat{\section}[runin]
{\normalfont\bfseries}{\thesection.}{0.5em}{}
\titleformat{\subsection}[runin]
{\normalfont\bfseries}{\thesubsection.}{0.5em}{}
\titleformat{\subsubsection}[runin]
{\normalfont\bfseries}{\thesubsubsection.}{1em}{}

\bmdefine\vvarepsilon{\varepsilon}

\title{Incorporating prior information and borrowing information in high-dimensional sparse regression
	using the horseshoe and variational Bayes}
\author{ 	\textbf{Gino B. Kpogbezan}$^{1}$, \textbf{Mark A. van de Wiel}$^{2,3}$,\\
	\textbf{Wessel N. van Wieringen}$^{2,3}$, \textbf{Aad W. van der Vaart}$^{1}$
	\\
	{\small $^1$ Department of Mathematics, Leiden University}
	\\
	{\small Niels Bohrweg 1, 2333 CA Leiden, The Netherlands}
	\\
	{\small $^2$ Department of Epidemiology and Biostatistics, VU University Medical Center}
	\\
	{\small P.O. Box 7057, 1007 MB Amsterdam, The Netherlands}
	\\
	{\small $^3$ Department of Mathematics, VU University Amsterdam}
	\\
	{\small De Boelelaan 1081a, 1081 HV Amsterdam, The Netherlands}
}

\date{}

\begin{document}
	\maketitle
	
	\begin{abstract}
		We introduce a sparse high-dimensional regression approach that can incorporate prior information on the regression 
		parameters and can borrow information across a set of similar datasets. Prior information may for instance 
		come from previous studies or genomic databases, and information borrowed across a set of genes or genomic networks. 
		The approach is based on prior modelling of the regression parameters using the horseshoe prior, with a prior on the sparsity index 
		that depends on external information. Multiple datasets are integrated by applying an empirical Bayes strategy on hyperparameters.
		For computational efficiency we approximate the posterior distribution using a variational Bayes method. The proposed framework is useful for
		analysing large-scale data sets with complex dependence structures. We illustrate this by applications to the reconstruction of gene regulatory
		networks and to eQTL mapping.
	\end{abstract}
	
	\bigskip
	\noindent Keywords: high-dimensional Bayesian inference; empirical Bayes; horseshoe prior; variational Bayes approximation; prior knowledge.
	
	\section{Introduction}
	\label{intro}
	The analysis of high-dimensional data is important in many scientific areas, and often poses the challenge
	of the availability of a relatively small number of cases versus a large number  of unknown parameters. It has been
	documented both practically and theoretically that under the assumption of sparsity of the underlying model,
	larger effects or dependencies can be inferred even in the very high-dimensional case \cite{Hastie2015,Hahn_Carvalho2015}.
	Still in many cases conclusions can be much improved by incorporating prior knowledge
	in the analysis, or by ``borrowing information'' by simultaneously analysing multiple
	related datasets. In this paper we introduce a methodology that achieves both, and that is 
	at the same time scalable to large datasets in its computational complexity. It is based on 
	an empirical Bayesian setup, where external information is incorporated through the prior, and information is
	borrowed across similar analyses by empirical Bayes estimation of hyperparameters. Sparsity is induced
	through utilisation of the horseshoe prior, and computational efficiency through novel
	variational Bayes approximations to the posterior distribution. We illustrate the methodology by two applications in genomics:
	network reconstruction and  eQTL mapping, but the proposed framework should be useful also for
	analysing other large-scale data sets with complex dependence structures.
	
	Our working model is a collection of linear regression models, indexed by $i=1,\ldots,p$, corresponding
	to $p$ characteristics (e.g.\ genes). For each characteristic we have measurements on $n$ individuals, labelled $j=1,\ldots,n$, consisting
	of a univariate response $Y_i^j$ and a vector $X_i^j$ of $s_i$ explanatory variables. We collect the $n$ responses on
	characteristic $i$ in the $n$-vector $Y_i=(Y_i^1,\ldots,Y_i^n)^T$ and similarly collect the explanatory variables in
	the $n\times s_i$-matrix $X_i$, having rows $X_i^j$, and adopt the regression models
	\begin{equation}
	\label{EqMultR}
	Y_i=X_i\beta_{i}+\epsilon_i, \qquad  i=1, \ldots, p.
	\end{equation}
	Here the regression coefficients $\beta_i$ form a vector in $\RZ^{s_i}$, and the error vectors $\epsilon_i$'s 
	are unobserved.
	The dimension $s_i$ of the regression parameter $\beta_i$ may be different for different characteristics $i$.
	
	Our full set of observations consists of the pairs $(Y_1,X_1),\ldots, (Y_p,X_p)$, whose stochastic dependence
	will not be used and hence need not be modelled. In addition to these regression pairs we assume 
	available prior information on the vectors $\beta_i$ in the form of a 2-dimensional array $P$, whose $i$th row
	presents a grouping of the coordinates of $\beta_i$ into $G$ groups, indexed by $g=1,\ldots, G$: the 
	value $P_{i,t}$ is the index of the group to which the $t$th coordinate of $\beta_i$ belongs. (Because the $\beta_i$ may have
	different lengths, $P$ is a possibly ``ragged array'' and not a matrix.) The information in $P$ is considered to be soft
	in  that coordinates of $\beta_i$ that are assigned to the same group are thought to be similar in size, 
	but not necessarily equal. The information may for instance come from a previous analysis of 
	similar data, or be taken from a genomic database.
	
	We wish to analyse this data, satisfying four aims:
	\begin{compactitem}
		\item Borrow information across the characteristics $i=1,\ldots,p$ by linking the analyses of the models \eqref{EqMultR} for different $i$.
		\item Incorporate the prior information $P$ in a soft manner so that it informs the analysis if correct, but can be
		overruled if completely incompatible with the data.
		\item Allow for sparsity of the explanatory models, i.e.\ focus the estimation towards parameter vectors $\beta_i$ with
		only a small number of significant coefficients, enabling analysis for small $n$ relative to $s_i$ and/or $p$. 
		\item Achieve computational efficiency, enabling analysis with large $s_i$ and/or $p$.
	\end{compactitem}
	To this purpose we model the parameters $\beta_i$ and the scales $\sigma_i$ of the error vectors
	through a prior, and next perform
	empirical Bayesian inference. This analysis is informed by the model \eqref{EqMultR} and the following hierarchy
	of a generating model (referred to as \emph{pInc} later on) for the errors and a prior model  for $(\beta_i,\sigma_i)$:
	\begin{equation}
	\label{sMultRed}
	%\label{sMultRed2}
	\begin{aligned}
	\epsilon_i\given \sigma_i&\sim \text{N}(0_n, \sigma_i^2{\bf{I}}_n),\\
	\beta_{i,t}\given \sigma_i, \tau_{i,P_{i,t}},\lambda_{i,t}&\sim  \text{N}\bigl(0,\sigma_i^2 \tau_{i,P_{i,t}}^{2}\lambda_{i,t}^2\bigr),\qquad t= 1, \ldots, s_i,\\
	%{\bf{D}}_{\tau_i\lambda_i}  &=\text{diag}(\tau_{i,P_{i1}}^{2}\lambda_{i,1}^2, \ldots ,\tau_{i,P_{is_i}}^{2}\lambda_{i,s_i}^2) ,\\
	\sigma^{-2}_i & \sim \Gamma(c, d),\\
	\lambda_{i,t} & \sim C^+(0,1), \qquad t= 1, \ldots, s_i,\\
	\tau_{i,g}^{-2} &  \sim \Gamma(a_g, b_g),  \qquad g=1,\ldots, G.
	\end{aligned}
	\end{equation}
	Here N is a (multivariate) normal distribution,
	${\bf{I}}_n$ is the $(n\times n)$-identity matrix, $C^+(0,1)$ denotes the standard Cauchy distribution restricted to the positive
	real axis, and $\Gamma(u, v)$ denotes the gamma distribution with shape and rate parameters $u$ and $v$.
	As usual the hierarchy should be read from bottom to top, where dependencies of distributions on variables 
	at lower levels are indicated by conditioning, and absence of these variables in the conditioning should be understood as the assumption of conditional
	independence on variables at lower levels of the hierarchy.
	The specification \eqref{sMultRed} gives the model for the $i$th characteristic. The models for different $i$ 
	are linked by assuming the same values of the hyperparameters $a_1,b_1,\ldots, a_G,b_G,c,d$ for all $i=1,\ldots,p$. These
	hyperparameters will be estimated from the combined data $(Y_1,X_1),\ldots, (Y_p,X_p)$ by the
	empirical Bayes method, thus borrowing strength across responses  and achieving the first of the four aims, as listed previously.
	
	We also consider a variant of the model (later referred to as
	\emph{pInc2}) in which the last line of the hierarchy is dropped
	and the parameters $\tau_{i,g}$ are pooled into a single parameter $\tau_{i,g}=\tau_g$ per group
	($i=1,\ldots, p$). The parameters $\tau_g$ are then estimated by empirical Bayes on the data 
	pooled over $i$.  In some of the simulations this model outperformed \eqref{sMultRed}.
	
	The $i$th row of $P$ gives a grouping of the $s_i$ coordinates $\beta_{i,t}$ of $\beta_i$ into $G$
	groups. The scheme \eqref{sMultRed} attaches a latent variable $\tau_{i,g}$ to each group, for
	$g=1,\ldots, G$, whose squares possess inverse gamma distributions, independently across
	groups. These latent variables enter the prior distributions of the coordinates of $\beta_i$, which
	marginally given $\tau_{i,g}$ are scale mixtures of the normal distribution. Choosing the scale
	parameters $\lambda_{i,t}$ from the half-Cauchy distribution gives the so-called \emph{horseshoe
		prior} \cite{Carvalho2009,Carvalho2010}. This may be viewed as a continuous alternative to the
	traditional \emph{spike-and-slab} prior, which is a mixture of a Dirac measure at zero and a widely
	spread second component, and is widely used as a prior that induces sparsity.
	
	The horseshoe density with scale $\tau$ is the mixture of the univariate normal distributions
	$N(0,\tau\lambda)$ relative to the parameter $\lambda\sim C^+(0,1)$. It combines an infinite peak at
	zero with heavy tails, and is able to either shrink parameters to near zero or estimate them
	unbiasedly, much as an improper flat prior. The relative weights of the two effects are moderated by
	the value of $\tau$. In the model \eqref{sMultRed} the coordinates of $\beta_i$ corresponding to the
	same group $g$ receive a common parameter $\tau_{i,g}$, and are thus either jointly shrunk to zero
	or left free, depending on the value of $\tau_{i,g}$. This allows to achieve the aims two and three
	as listed previously. Theoretical work in
	\cite{vdPas2014,vdPasSzabovdVEJS17,vdPasSzabovdV17,Carvalho2010,Datta2013} (in a simpler model) 
	suggests an interpretation of $\tau_{i,g}$ as approximately the fraction of nonzero coordinates in
	the $g$th group, and corroborates the interpretation of $\tau_{i,g}$ as a sparsity parameter. In
	model \eqref{sMultRed} this number is implicitly set by the data, based on the inverse gamma prior
	on $\tau_{i,g}^2$. Requiring the hyperparameters of these gamma distributions to be the same across
	the characteristics $i$ induces the borrowing of information between the characteristics $i$, in
	particular with respect to the sparsity of the vectors $\beta_i$.
	
	Model \eqref{sMultRed} chooses the squares of the scales $\sigma_i$ of the error variables from an
	inverse gamma distribution, which is the usual conjugate prior. The priors on the regression
	parameters $\beta_i$ are also scaled by $\sigma_i$, thus giving them a priori the same order of
	magnitude. This seems generally preferable.
	
	The Bayesian model described by \eqref{EqMultR} and \eqref{sMultRed} leads to a posterior distribution
	of $(\beta_i,\sigma_i)$ in the usual way, but this depends on the hyperparameters $a_1,b_1,\ldots, a_G,b_G,c,d$.
	In Section~\ref{EB} we introduce a method to estimate these hyperparameters from the full data $(Y_1,X_1),\ldots, (Y_p,X_p)$, and next 
	base further inference on the posterior distributions of the parameters $(\beta_i,\sigma_i)$ evaluated at the plugged-in estimates
	of the hyperparameters. Because the prior on the coefficients $\beta_i$ is continuous, the posterior distribution does not
	provide automatic model (or variable) selection, which is a disadvantage of the horseshoe prior relative to the
	spike-and-slab priors. To overcome this, we develop a way of testing for nonzero regression coefficients 
	based on the marginal posterior distributions of the $\beta_{i,t}$  in Section~\ref{varSel}.
	
	The horseshoe prior has gained  popularity, mainly due to its computational advantage over 
	spike-and-slab priors. However, in our high-dimensional setting the approximation of the posterior distribution
	by an MCMC scheme  turns out to be still a computational bottleneck. 
	The algorithm studied by \cite{Bhattacharya2016}, which can be applied in the special case of a single group ($G=1$)
	has complexity $O(n^2s_i)$ for a single regression (i.e. $p=1$) per MCMC iteration.
	We show in Section~\ref{VBvsGibbs} that this is too slow to be feasible in our setting.
	For this reason we develop in Section~\ref{VBa} a variational Bayesian (VB) scheme 
	to approximate the posterior distribution, in order to satisfy the fourth aim in our list.
	
	The variational Bayesian method consists of approximating the posterior distribution by a
	distribution of simpler form, which is chosen as a compromise between computational tractability and
	accuracy of approximation. The quality of the approximation is typically measured by the Kullback-Leibler
	divergence \cite{WainwrightJordan2008}.  Early applications involved standard distributions such as
	Gaussian, Dirichlet, Laplace and extreme value models 
	\cite{Attias1999,McGrory,ArchambeauBach,Armagan09,wand2011}. In the present paper we use nonparametric
	approximations, restricted only by the assumption that the various parameters are (block) independent.
	(This may be referred to as \emph{mean-field} variational Bayes, although this term appears to be used more often for independence of all
	univariate marginals, whereas we use block independence.)
	In this case the variational posterior approximation  can be calculated by iteratively updating the marginal
	distributions \cite{BleiJordan2006,OrmerodWand2010}. Variational Bayes typically produces accurate approximations to posterior means,
	but have been observed to underestimate posterior spread \cite{Giordano2017,Blei2016,CarbonettoStephens2012,MacKay2003,WangTitterington2004,TurnerSahani2011a,WestlingMcCormick2015arXiv,WangBlei2017arXiv}. We find that in our setting the approximations
	agree reasonably well to MCMC approximations of the marginals, although the latter take much longer
	to compute.
	
	The model \eqref{EqMultR}-\eqref{sMultRed} may be useful for data integration in a variety 
	of scientific setups, and for data sources as diverse as gene expression, copy number variations,
	single nucleotide polymorphisms, functional magnetic resonance imaging, or social media data. 
	The external information incorporated in the array $P$ may reflect data of a different type,
	and/or of a different stage of research, and the simultaneous analysis of different characteristics
	allows further data integration. For example, in genetic association studies data from multiple stages 
	can help the identification of true associations \cite{Reif2004,Hamid2009,Hawkins2010}.
	In this paper we consider applications  to gene regulatory networks and to eQLT mapping, which we describe in the next two sections,
	before developing the general algorithms for models \eqref{EqMultR} and \eqref{sMultRed}.
	
	The remainder of the paper is organised as follows. In Section~\ref{VBa} we develop a variational Bayes
	approach to approximate the posterior distributions of the regression parameters for given
	hyperparameters, and show this to be comparable in accuracy to Gibbs sampling  in Section~\ref{VBvsGibbs}, although
	computationally much more efficient. 
	In Section~\ref{EB} we develop the 
	Empirical Bayes (EB) approach for estimating the hyperparameters, and in Section~\ref{varSel} we present a threshold based-procedure for selecting nonzero regression coefficients 
	based on the marginal posterior distributions of the $\beta_{i,t}$.  
	We show in Section~\ref{modSim} by means of model-based simulations that the proposed approach performs better, in terms of both average $\ell_1$-error and average ROC curves, 
	than its ridge counterpart in the framework of network reconstruction. The potential of our approach is shown on real data in Section~\ref{dataApp} both in gene regulatory network reconstruction and in eQTL mapping.
	Section~\ref{end} concludes the paper.

	\section{Network reconstruction}
	\label{App1}
	The identification of gene regulatory networks is crucial for understanding gene function, and hence important for both
	treatment and prediction of diseases.
	Prior knowledge on a given network is often available in the literature, from repositories or pilot studies,
	and combining this with the data at hand can significantly improve the accuracy of reconstruction \cite{Kpogbezan2017}. 
	
	A \emph{Gaussian graphical model} readily gives rise to a special case of the model \eqref{EqMultR}-\eqref{sMultRed}.
	In such a model the data concerning $p$ genes measured in a single individual (e.g.\ tissue) is assumed to form a multivariate Gaussian $p$-vector,
	and the network of interest is the corresponding \emph{conditional independence graph} \cite{whittaker1990graphical}. The nodes
	of this graph are the genes and correspond to the $p$ coordinates of the Gaussian vector. Two nodes/genes
	are connected by an edge in the graph if the corresponding coordinates are \emph{not} conditionally independent given the other coordinates.
	It is well known that this is equivalent to the corresponding element in the precision matrix of the Gaussian
	vector being nonzero \cite{lauritzen1996}.
	
	Assume that we observe a gene vector for $n$ individuals, giving rise to $n$ independent copies $Y^1,\ldots, Y^n$
	of $p$-vectors satisfying 
	\begin{equation}
	\label{EqData}
	Y^j \sim^{\text{iid}}  \text{N}(0_p, \Omega_p^{-1}), \qquad j = 1, \ldots, n.
	\end{equation}
	\noindent Here $\Omega_p$ is the \emph{precision matrix}; its inverse is the covariance matrix of the vector 
	$Y^j$ and is assumed to be positive-definite. The Gaussian graphical model consists of a
	graph with nodes $1,2,\ldots, p$ and with edges $(i,j)$ given by the nonzero elements $(\Omega_p)_{i,j}$ of the precision matrix.
	Hence to reconstruct the conditional independence graph it suffices to determine the non-zero elements of the latter matrix.
	
	We relate this to the notation used in the introduction by writing $Y^j=(Y^j_1,\ldots, Y^j_p)^T$, and next
	collecting the observations $Y^j_i$ per gene $i$, giving the $n$-vector $Y_i=(Y^1_i,\ldots,Y^n_i)^T$, for $i=1,\ldots,p$.
	We next define $$X_i=[Y_1,Y_2,...,Y_{i-1},Y_{i+1},...,Y_p]$$ as the $(n\times (p-1))$-matrix with columns
	$Y_t$, for $t\not=i$. It is well known that the residual when regressing a single coordinate $Y^j_i$
	of a multivariate Gaussian vector linearly on the other coordinates $Y^j_t$, for $t\not=i$, is Gaussian. 
	Furthermore, the regression coefficients $\beta_i=(\beta_{i,t}: t\not=i)$ can be expressed in the precision matrix of $Y^j$ as 
	$$\beta_{i,t}=-\frac{(\Omega_p)_{it}}{(\Omega_p){ii}}.$$
	This shows that \eqref{EqMultR} holds with $s_i=p-1$ and a multivariate normal error vector $\epsilon_i$ with variance $\sigma_i^2$ 
	equal to the residual variance. Moreover, the (non)zero entries in the $i$th row vector of the precision matrix 
	$\Omega_p$ correspond to the (non)zero coordinates of $\beta_i$. Consequently, the
	problem of identifying the Gaussian graphical model can be cast as a variable selection
	problem in the $p$ regression models \eqref{EqMultR}.
	
	This approach of recasting the estimation of the (support of the) precision matrix as 
	a collection of regression problems was introduced by \cite{meinshausen2006}, who employed  Lasso regression
	\cite{Tibshirani96regressionshrinkage,Friedman2008}  to estimate the parameters. Other variable selection methods can be employed as well \cite{Kraemer}.
	A Bayesian approach with Gaussian, ridge-type priors on the regression coefficients was developed in \cite{leday},
	and extended in \cite{Kpogbezan2017} to incorporate prior knowledge on the conditional independence graph.
	A disadvantage of the Gaussian priors  employed in these papers
	is that they are not able to selectively shrink parameters, but shrink them jointly towards zero (although
	prior information used  in \cite{Kpogbezan2017} alleviates this by making this dependent on  prior group).
	This is similar to the shrinkage effect of the ridge penalty \cite{rags2ridgesPaper} relative to the Lasso,
	which can shrink some of the precision matrix elements to exactly zero, and hence possesses intrinsic model selection properties.
	The novelty of the present paper is to introduce the horseshoe prior in order to better model the sparsity of the network. 
	
	We assume that the prior knowledge on the to-be-reconstructed network is available as a ``prior network'', which specifies
	which edges (conditional independencies) are likely present or absent. This information can be coded in an adjacency
	matrix \text{P}, whose entries take the values 0 or 1 corresponding to the absence and presence of an
	edge: $\text{P}_{i,t}=1$ if variable $i$ is connected with variable $t$ and  $\text{P}_{i,t}=0$ otherwise. 
	Thus in this example we only have two groups, i.e.\ $G=2$.
	
	The advantage of reducing the network model to structural equation models of the type
	\eqref{EqMultR} is computational efficiency.  An alternative would be to model the precision matrix
	directly through a prior. This would typically consist of a prior on the graph structure, followed
	by a specification of the numerical values of the precision matrix given its set of nonzero
	coefficients. The space of graphs is typically restricted to e.g.\ decomposable graphs, forests, or trees
	\cite{GiudiciGreen1999,Dobra2004,Jones2005}.
	The posterior distribution of the graph structure can then be used as the basis of inference
	on the network topology. However, except in very small problems, the computational
	burden is prohibitive.

	\section{eQTL mapping}
	\label{App2}
	In eQTL mapping the expression of a gene is taken as a quantitative trait,
	and it is desired to identify the genomic loci that influence it, much as in 
	a classical study of quantitative trait loci (QTL) of a general phenotype. Typically one
	measures the expression of many genes simultaneously and tries to map these to their QTL.
	Since gene expression levels are related to disease
	susceptibility, elucidating these eQLT (expression QTL) may give important insights into the
	genetic underpinnings of complex traits. 
	We shall identify genetic loci here with single nucleotide polymorphisms (SNPs), but other
	biomarkers can be substituted.
	
	Early works by \cite{Cheung2005,Stranger2005,Zhu2008} considered every gene 
	separately for association. However, many genes are believed to be
	co-regulated and to share a common genetic basis \cite{Pujana2007,ZhangHorvath2005}. In addition, SNPs with
	pleiotropic effects may be more easily identified by considering multiple genes together.
	Therefore following \cite{Segal2003,Lee2006,KimXing2012}, we focus on a joint analysis, borrowing
	information across genes. We regress the expression of a given gene on SNPs both within and around the gene,
	where our model is informed about the SNP location.  The sparse parametrization offered by our model is suitable,
	as most  genetic variants are thought to have a negligible (if any) differential effect on expression. 
	
	Suppose we collect the (standardized) expression levels of $p$ genes over $n$ individuals, and identify for each gene $i$ 
	a collection of $s_i$ \text{SNPs} to be investigated for association. For instance, the latter collections 
	might contain all SNPs in a relatively large window around the gene, some of which falling inside the gene and some outside.
	For each individual and SNP we ascertain the number of minor alleles ($0$, $1$ or $2$), and 
	change all 2's to 1's. Because there are not many 2's in the data this does not reduce the information while it simplifies the modelling.
	We use these numbers to form the $n\times s_i$-matrix $X_i$.
	%, whose $j$th row gives the genotype of the $j$th individual.
	Let $Y_i$ be the $n$-vector of expression levels for gene $i$, and assume the linear model \eqref{EqMultR}.

	It is believed that SNPs that occur within a gene may play a more direct role in
	the gene’s function than SNPs at other genomic locations \cite{SchroderSchumann2005,Lehne2011}.
	Therefore, it is natural to treat \text{SNPs} falling within a given
	gene differently than the ones not falling within that gene. 
	This gives rise to two groups of \text{SNPs} for a given gene, which we can encode
	as prior knowledge in a $2$-dimensional array \text{P} with values $0$ and $1$. 
	
	Thus we have another instance of model \eqref{EqMultR}-\eqref{sMultRed} with two groups, i.e.\ $G=2$.

	\section{ Posterior inference}
	\label{Inference}
	In this section we discuss statistical inference for the model \eqref{EqMultR}-\eqref{sMultRed}.
	This consists of three steps: the approximation to the posterior distribution of the model for given hyperparameters (and given $i$), 
	the estimation of the hyperparameters (across $i$), and finally a method of variable selection.
	
	\subsection{Variational Bayes approximation}
	\label{VBa}
	The \emph{variational Bayes approximation} to a distribution is simply the closest element in a given
	target set $\mathcal{Q}$ of distributions, usually with ``distance'' measured by Kullback-Leibler divergence \cite{WainwrightJordan2008}.
	In our situation we wish to approximate the posterior distribution of the parameter
	$\theta_i:=(\beta_i, \lambda_{i,1}, \cdots, \lambda_{i,s_i},\tau_{i,1}, \cdots, \tau_{i,G}, \sigma_i)$ 
	given $Y_i$ in the model \eqref{EqMultR}-\eqref{sMultRed},  for a fixed $i$. Here we take the regression matrix $X_i$ as given.
	
	Thus the variational Bayes approximation is given as the density $q\in \mathcal{Q}$ that minimizes over $\mathcal{Q}$,
	\begin{equation*}
	%\label{sMultRed2}
	KL\bigl(q || p(\cdot\given Y_i)\bigr)  =  \mathbf{E}_q \log \frac{q(\theta_i)}{p(\theta_i\given Y_i)}
	=  \log p(Y_i) - \mathbf{E}_q\log \frac{p(Y_i,\theta_i)}{q(\theta_i)},
	\end{equation*} 
	where $\theta_i\mapsto p(\theta_i\given Y_i)$ is the posterior density, the expectation is taken with respect to $\theta_i$ having the density $q\in \mathcal{Q}$,
	and $(y,\theta_i)\mapsto p(y,\theta_i)=p(y\given \theta_i)\,\pi_i(\theta_i)$ and $y\mapsto p(y)=\int p(y, \theta_i)\,d\theta_i$ 
	are the joint density of $(Y_i,\theta_i)$ and the marginal density of $Y_i$, respectively,  in the model
	\eqref{EqMultR}-\eqref{sMultRed}, with prior density $\pi_i$ on $\theta_i$. As the marginal density is free of $q$, minimization
	of this expression is equivalent to maximization of the second term
	\begin{equation}
	\label{LowerBound}
	\mathbf{E}_q\log \frac{p(Y_i,\theta_i)}{q(\theta_i)}.
	\end{equation}
	By the non-negativity of the Kullback-Leibler divergence, this  expression is a lower bound 
	on the logarithm of the marginal density $p(Y_i)$ of the observation. For this reason it is 
	usually referred to as ``the lower bound'', or ``ELBO'',  and 
	solving the variational problem  is equivalent to maximizing this lower bound.
	
	The set $\mathcal{Q}$ is chosen as a compromise between computational tractability and accuracy of approximation.
	Restricting $\mathcal{Q}$ to distributions for which all marginals of $\theta_i$ are independent 
	is known as \emph{mean-field} variational Bayes, or also as the ``na\"ive factorization'' \cite{WainwrightJordan2008}.
	Here we shall use the larger set of distributions under which  the blocks of $\beta$, $\lambda$, $\tau$ and $\sigma$-parameters
	are independent. Thus we optimize over probability densities  $q$ of the form 
	$$q(\theta_i)=
	q_{\beta}(\beta_i)\cdot q_{\lambda}(\lambda_{i,1}, \cdots, \lambda_{i,s_i})\cdot q_{\tau}(\tau_{i,1}, \cdots, \tau_{i,G}) \cdot  q_{\sigma}(\sigma_i).$$
	There is no explicit solution to this optimization problem. However, if all marginal factors but a single one in
	the factorization are fixed, then the latter factor can be characterised easily, using the non-negativity of the
	Kullback-Leibler divergence. This leads to an iterative algorithm, in which the factors are updated in turn.
	
	In the Supplementary Material (SM) we show that in our case the iterations take the form:
	\begin{equation}
	\label{margPost}
	\begin{aligned}
	\beta_i\given Y_i &              \sim \text{N}\big(\beta_i^* ,\Sigma_i^*\big),\\
	\lambda_{i,t}\given Y_i    &   \sim \Lambda_{\lambda_{it}},    \qquad \qquad t=1, \cdots, s_i,\\
	\tau_{i,g}^{-2}\given Y_i &  \sim \Gamma(a_{i,g}^*, b_{i,g}^*),    \quad g=1, \cdots, G,\\
	%\tau_{i,1}^{-2}\given Y_i &  \sim \Gamma(a_{i,1}^*, b_{i,1}^*), \\
	\sigma^{-2}_i\given Y_i &   \sim \Gamma\big(c_{i}^*, d_{i}^*\big),
	\end{aligned} 
	\end{equation}
	where $\Lambda_{l}$ is the distribution with probability density function proportional to
	$$\lambda \mapsto \frac{1}{\lambda(1+\lambda^2)}e^{-l\lambda^{-2}},\qquad (\lambda>0),$$
	and the parameters on the right hand side satisfy
	\begin{align*}
	\Sigma_i^* &= \Big[\mathbf{E}_{q_{\sigma}^*}(\sigma_i^{-2})\big(X_i^{T}X_i + {\bf{D}}^{-1}_{\mathbf{E}_{q^*_{\tau_i}\cdot q^*_{\lambda_i}} }\big)\Big]^{-1},\\
	\beta_i^*   &= \big(X_i^{T}X_i + {\bf{D}}^{-1}_{\mathbf{E}_{q^*_{\tau_i}\cdot q^*_{\lambda_i}} }\big)^{-1}X_i^{T}Y_i,\\
	a_{i,g}^*        &=a_g+0.5\cdot\frac{s^g_{i}}{2},\\
	b_{i,g}^*        & =b_g +0.5\cdot \mathbf{E}_{q_{\sigma}^*}(\sigma_i^{-2})\mathbf{E}_{q_{\textbackslash\tau_g}^*}\Big({\beta^g_{i}}^{T}{\bf{D}}^{-1}_{\lambda_i}\beta^g_{i}\Big),  \qquad g=1, \cdots, G,\\
	c_{i}^*        &=c+\frac{n}{2}+\frac{s_i}{2},\\
	d_{i}^*       & =d +0.5\cdot \mathbf{E}_{q_{\textbackslash\sigma}^*}\Big(\beta_i^{T}{\bf{D}}^{-1}_{\tau_i\lambda_i}\beta_i\Big)+ 0.5\cdot \mathbf{E}_{q_{\beta}^*}(Y_i- X_i\beta_i)^{T}(Y_i-X_i\beta_i),\\
	{\bf{D}}_{\lambda_i}  &=\text{diag}(\lambda_{i,1}^2, \ldots ,\lambda_{i,s_i}^2), \\
	{\bf{D}}_{\tau_i\lambda_i}  &=\text{diag}(\tau_{i,P_{i,1}}^{2}\lambda_{i,1}^2, \ldots ,\tau_{i,P_{i,s_i}}^{2}\lambda_{i,s_i}^2), \\
	{\bf{D}}^{-1}_{\mathbf{E}_{q^*_{\tau_i}\cdot q^*_{\lambda_i}} } &= \diag\Big(\mathbf{E}_{q_{\tau_i}^*}(\tau_{i,P_{i,1}}^{-2}) \mathbf{E}_{q_{\lambda_{i1}}^*}(\lambda_{i,1}^{-2}), \ldots ,\mathbf{E}_{q_{\tau_i}^*}(\tau_{i,P_{i,s_i}}^{-2}) \mathbf{E}_{q_{\lambda_{is_i}}^*}(\lambda_{i,s_i}^{-2})\Big),\\
	l_{it} &= \frac{1}{2}\mathbf{E}_{q_{\sigma}^*}(\sigma_i^{-2})\mathbf{E}_{q_{\tau}^*}(\tau_{i,P_{i,t}}^{-2})\mathbf{E}_{q_{\beta}^*}(\beta_{i,t}^2).
	\end{align*}
	In these expressions, $s^g_{i}$ is the number of $g$'s in the $i$-th row of the 2-dimensional array
	\text{P} encoding the $G$ groups, $g=1, \cdots, G$;  and
	$\beta_i^g =\{\delta_{\{\text{P}_{i,r}=g\}} \beta_{i,r}: r \in \{1, \cdots, s_i \} \}$ 
	is the vector obtained from $\beta_i$ by replacing the coordinates not corresponding to group $g$ by $0$.
	
	The expected value of $z_{it}:=(\lambda_{it})^{-2}$, which appears in the expression of $\beta_i^*$, $\Sigma_i^*$, $b_{i,g}^*$ and $d_{i}^* $ above, 
	is given in the following lemma.
	
	\begin{lem}
		The norming constant for $\Lambda_l$ is $2 \exp(-l)/E_1(l)$ and the expectation of $z_{it}=(\lambda_{it})^{-2}$ if
		$\lambda_{it}\sim \Lambda_{\lambda_{it}}$ is given by 
		$$ \mathbf{E}(z_{it}) =  \frac{1}{l_{it}\cdot \exp(l_{it})\cdot E_1(l_{it})} - 1,$$
		where $E_1$ is the \emph{exponential integral function of order 1}, defined by
		$$E_1(x) \equiv \int_{x}^{\infty}\frac{e^{-t}}{t}dt, \qquad x \in \mathbb{R}^+.$$
	\end{lem}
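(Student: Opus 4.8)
The plan is to compute both quantities directly from the defining density by a single change of variables that turns the heavy-tailed integrand into an exponential-integral form. Since the expectation we ultimately want is of $z_{it}=\lambda_{it}^{-2}$, the natural substitution is $u=\lambda^{-2}$ (equivalently $\lambda=u^{-1/2}$, $d\lambda=-\tfrac12 u^{-3/2}\,du$), which also rationalises the factor $1/(1+\lambda^2)$. Under this substitution one checks that $\frac{1}{\lambda(1+\lambda^2)}\,d\lambda=\frac{1}{2}\,\frac{1}{u+1}\,du$ and $e^{-l\lambda^{-2}}=e^{-lu}$, so the unnormalised mass becomes $\frac12\int_0^\infty \frac{e^{-lu}}{u+1}\,du$. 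A further shift $v=u+1$ followed by rescaling $t=lv$ identifies this integral with $\frac{e^l}{2}\int_l^\infty \frac{e^{-t}}{t}\,dt=\frac{e^l}{2}E_1(l)$, giving the norming constant $2e^{-l}/E_1(l)$ upon inversion.

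For the expectation, I would reuse the same change of variables, noting that multiplying the density by $z=\lambda^{-2}=u$ simply inserts a factor $u$ into the transformed integrand. Thus $\mathbf{E}(z)$ equals $\frac{C}{2}\int_0^\infty \frac{u\,e^{-lu}}{u+1}\,du$, where $C=2e^{-l}/E_1(l)$ is the norming constant just found. The key simplification is the algebraic identity $\frac{u}{u+1}=1-\frac{1}{u+1}$, which splits the integral into the elementary $\int_0^\infty e^{-lu}\,du=1/l$ and the already-evaluated $\int_0^\infty \frac{e^{-lu}}{u+1}\,du=e^l E_1(l)$. Substituting these back and cancelling yields $\mathbf{E}(z)=\frac{e^{-l}}{E_1(l)}\bigl(\tfrac1l-e^l E_1(l)\bigr)=\frac{1}{l\,e^l\,E_1(l)}-1$, as claimed.

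There is no serious analytic obstacle here; the whole computation is a sequence of routine substitutions. The only genuine ``insight'' is choosing $u=\lambda^{-2}$ at the outset — the variable in which the answer is phrased — since this simultaneously tames the $1/(1+\lambda^2)$ tail, converts the essential singularity $e^{-l\lambda^{-2}}$ at the origin into an ordinary exponential, and produces the $1/(u+1)$ kernel whose Laplace transform is exactly $E_1$. I would be careful to verify convergence at both endpoints (the integrand is integrable at $u=0$ and decays exponentially as $u\to\infty$, so all integrals are finite for $l>0$) and to track the orientation-reversing Jacobian so that the sign and the factor $\tfrac12$ come out correctly.
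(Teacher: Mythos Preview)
Your proposal is correct and follows exactly the approach the paper has in mind: the paper's own proof is the single sentence ``This follows by easy manipulation and the standard density transform formula,'' and your substitution $u=\lambda^{-2}$ is precisely that density transform, with the remaining steps (the shift $v=u+1$, the rescaling $t=lv$, and the algebraic split $\tfrac{u}{u+1}=1-\tfrac{1}{u+1}$) being the ``easy manipulation.'' Your write-up simply supplies the details the paper omits.
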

	
	\begin{proof}
		This follows by easy manipulation and the standard density transform formula.
	\end{proof}
	
	\noindent The function $E_1$ can be evaluated effectively by the function {\texttt{expint\_E1()}} in the R package {\bf{gsl}} \cite{Hankin2007}. The latter uses the GNU Scientific Library \cite{Galassi2009}. 
	
	In addition, the variational lower bound \eqref{LowerBound} on the log marginal likelihood at $q=q^*$ takes the form (See SM for details)	
	\begin{equation}
	\label{LowerBoundConcrete}
	\begin{aligned}
	\mathcal{L}_i &= -\frac{n}{2}\log (2\pi) - s_i\log (\pi) + \frac{1}{2} \log |\Sigma_i^\ast| + \frac{1}{2} s_i \\
	&\qquad+\sum\limits_{g = 1}^G ( a_g \log b_g - \log\Gamma(a_g) - a_{i,g}^\ast \log b_{i,g}^\ast + \log\Gamma(a_{i,g}^\ast) ) \\
	&\qquad  + c\log d - \log\Gamma(c) - c_i^\ast\log d_i^\ast + \log\Gamma(c_i^\ast)\\
	&\qquad + \sum\limits_{g = 1}^G \Bigl( \frac{1}{2} \mathbf{E}_{q_{\sigma}^*}(\sigma_i^{-2}) \mathbf{E}_{q_{\tau}^*}(\tau_{i,g}^{-2}) \mathbf{E}_{q^*}({\beta^g_{i}}^T{\bf{D}}^{-1}_{\lambda_i}\beta^g_{i}) \Bigr) \\
	&\qquad  + \sum\limits_{t = 1}^{s_i} \Bigl(\log E_1(l_{it}) + \frac{1}{\exp(l_{it})E_1(l_{it})} \Bigr).
	\end{aligned}
	\end{equation}

	\subsection{Global Empirical Bayes}
	\label{EB}
	Model \eqref{sMultRed} possesses the $G+1$ pairs of  hyperparameters $(a_1,b_1),\cdots,(a_G,b_G),(c,d)$.
	The pair $(c,d)$ controls the prior of the error variances $\sigma_i^2$; we fix this 
	to numerical values that render a vague prior, e.g.\ to $(0.001, 0.001)$. In contrast, we let the values
	of the parameters  $\alpha=(a_1,b_1, \cdots,  a_G,b_G)$ be determined by the data. As these hyperparameters
	are the same in every regression model $i$, this allows information to be borrowed across the regression equations,
	leading to  \emph{global shrinkage} of the regression parameters. 	The approach is similar to the one in \cite{vanDeWiel}.
	
	Precisely, we consider the criterion
	\begin{align}
	\label{EqEB}
	\alpha=(a_1,b_1, \cdots,  a_G,b_G) &\mapsto \sum_{i=1}^p \mathbf{E}_{q}\log \frac{p_\alpha(Y_i,\theta_i)}{q(\theta_i)}\\
	&=\sum_{i=1}^p\mathbf{E}_{q}\log \frac{p(Y_i\given\theta_i)}{q(\theta_i)}+\sum_{i=1}^p\mathbf{E}_{q}\log \pi_\alpha(\theta_i).\nonumber
	\end{align}
	The maximization of the function on the right with respect to $q\in\mathcal{Q}$ for fixed $\alpha$ 
	leads to the variational estimator $q^*$ considered in Section \ref{VBa}
	(which depends on $\alpha=(a_1,b_1, \cdots,  a_G,b_G)$).
	Rather than running the iterations \eqref{margPost} for computing this estimator to ``convergence'', next inserting
	$q=q^*_\alpha$ in the preceding display \eqref{EqEB},  and finally maximizing the resulting expression with respect to $\alpha$,
	we blend iterations to find $q^*$ and $\alpha^*$ as follows. Given an \emph{iterate} $q^*$ of \eqref{margPost} we set $q$ in \eqref{EqEB} equal to $q^*$ and find its maximizer $\alpha^*$
	with respect to $\alpha$. Next given $\alpha^*$ we set $\alpha$ (in the display following
	\eqref{margPost}  equal to $\alpha^*$ and use \eqref{margPost} 
	to find a next iterate of $q^*$. 
	We repeat these alternations to ``convergence''.
	
	For fixed $q=q^*$ the far right side in the second row of the preceding display depends on $\alpha$ only through
	$$ \sum_{i=1}^p\mathbf{E}_{q^*} \bigl(\log \pi_\alpha(\theta_i)\bigr).$$
	
	\noindent Using the approximation $\log(x)-\frac{1}{2x}  \approx  \Psi(x)=\frac{\partial }{\partial x}\log \Gamma (x)$, where $\Psi$ is the digamma function, the maximization yields (see SM for details)
	\begin{align*}
	\hat{a}_g & \approx  \thalf\Bigl[\log \Bigl(\sum_{i=1}^p\mathbf{E}_{q^*} \tau_{i,g}^{-2} \Bigr)
	- p^{-1}\Bigl(\sum_{i=1}^p\mathbf{E}_{q^*} \log \tau_{i,g}^{-2}\Bigr) - \log p \Bigr]^{-1} \\
	\hat{b}_g & = \hat{a}_g\cdot p\cdot \Bigl[\sum_{i=1}^p\mathbf{E}_{q^*} \tau_{i,g}^{-2} \Bigr]^{-1}
	\end{align*}
	where $g \in \{1, \cdots, G\}$.
	The following algorithm summarizes the above described procedure.
	\vspace{0.5cm}
	
	\begin{tabular}{|l|} 
		\hline
		\bf{Variational algorithm with sparse local-global shrinkage priors } \\  \hline
		\small		1: {\textbf{Initialize}} \\
		\small	     $   a^{(0)}_g=b^{(0)}_g=10^{-3}$, $g \in \{1, \cdots, G\}$ and $ \forall i\in \mathcal{I}$, $b_{i,g}^*=d_i^*= 10^{-3}$,         $\epsilon=10^{-3}$, \\
		M $=10^3$  and $k=1$ \\ 
		\small	      2: \textbf{while} $\max|\mathcal{L}_i^{(k)}-\mathcal{L}_i^{(k-1)}|\geq \epsilon$ \textbf{and} $2\leq k \leq$M \textbf{do}\\
		\small\quad \quad \quad E-step: Update variational parameters\\
		\small      3:\quad \quad \quad \textbf{for} $i=1$ to $p$ \textbf{update}\\
		\small\quad \quad \quad \quad   $ a^{*(k)}_{i,g}$, $ c^{*(k)}_{i}$,  \\
		\small\quad \quad \quad \quad   $\Sigma_i^{*(k)}$, $\beta_i^{*(k)}$, $b^{*(k)}_{i,g}$, $d_{i}^{*(k)}$, $l_{it}^{(k)}$ and $\mathcal{L}_i^{(k)}$; \quad $\forall g$ and $\forall t$ in that order\\
		\small \quad\quad \quad \quad \textbf{end for}\\
		\\
		\small\quad \quad \quad M-step: Update hyperparameters\\
		\small      4:\qquad 
		\small  	$a_g^{(k)}$, $b_g^{(k)}$; \quad $\forall g$ \\
		\small      5:\quad \quad \quad$ k\leftarrow k+1$\\
		\small	    6: \textbf{end while}
		\\  \hline
	\end{tabular}

	\subsection{Variable selection}
	\label{varSel}
	
	Because the horseshoe prior is continuous, the resulting posterior distribution does not set parameters
	exactly equal to zero, and hence variable selection requires an additional step.
	We investigated two schemes that both take the marginal posterior distributions of the parameters as input.
	
	\subsubsection{Thresholding}
	\label{varSel1}
	A natural method is to set a parameter $\beta_{i,r}$ equal to zero (i.e.\ remove the corresponding
	independent variable from the regression model) if the point 0 is not in the tails of its marginal
	posterior distribution, or more precisely, if 0 does belong to a central marginal credible
	interval for the parameter.  Given that our variational Bayes scheme produces conditional Gaussian
	distributions, this is also equivalent to the absolute ratio of posterior mean and standard
	deviation
	\begin{equation}
	\label{EqDefkappa}
	\kappa_{i,r}=\frac{\left\lvert \mathbf{E}_{q^{i \ast}}\left[\beta_{i,r}|\mathbf{Y}_i\right]\right\rvert }{ \mathbf{sd}_{q^{i \ast}}\left[\beta_{i,r}|\mathbf{Y}_i\right]}
	\end{equation}
	not exceeding some threshold. (In the network setup of Section~\ref{App1} we use the symmetrized
	quantity $(\kappa_{i,r}+\kappa_{r,i})/2$, as the two constituents of the average refer to the same parameter.)
	
	To determine a suitable cutoff or credible level we applied the variational Bayes procedure of
	Section~\ref{VBa} with all credible levels $\gamma$ on a grid with step size 5\% within the range $[10\%, 99.99\%]$,
	resulting in a model, or set of `nonzero' parameters $\beta_{i,r}$, for every $\gamma$.  
	We allow rather lenient credible levels because the model might benefit from the inclusion of fewer variables, in particular when strong collinearity is present.
	We next
	refitted the model \eqref{EqMultR}-\eqref{sMultRed} with the non-selected parameters $\beta_{i,r}$
	set equal to 0, evaluated the variational Bayes lower bound on the likelihood \eqref{LowerBound}
	(equivalently \eqref{LowerBoundConcrete}), and chose the value of $\gamma$ that maximized this
	likelihood and the corresponding model.
	When refitting we did not estimate the hyperparameters ($a$'s and $b$'s for \emph{pInc},
	$\tau$'s for \emph{pInc2}, as explained in Section~\ref{EB}), but used the values resulting from the
	entire data set.  Even though this procedure sounds involved, it is computationally fast, because it
	is free of the empirical Bayes step and typically needs to evaluate only models with few predictors.
	
	%\textcolor{blue}{\textbf{We investigated in the SM an alternative selection scheme, namely, the \emph{decoupled shrinkage and selection} (DSS) criterion proposed by \cite{Hahn_Carvalho2015}. The selection resulting from the combination \emph{pInc}-DSS is similar to the one described above (Section~\ref{varSel}) although computationally slow when applying both selection approaches on the eqtl example (Section~\ref{dataApp2}). A comparison of \emph{pInc}-DSS with the lasso in prediction shows a superior performance of \emph{pInc}-DSS (see SM).		}}
	
	\subsubsection{An alternative selection scheme}
	\label{varSel2}
	As an alternative  selection scheme we investigated the \emph{decoupled shrinkage and selection} (DSS) 
	criterion proposed by \cite{Hahn_Carvalho2015}. For each regression model $i$,
	given the posterior mean vector $\bar{\bbeta}_i=\mathbf{E}_{q^{i \ast}}\left[\beta_{i,\cdot}|\mathbf{Y}_i\right]$ determined 
	by the pooled procedure of Sections~\ref{VBa}-\ref{EB}, this calculates the adaptive lasso type estimate
	\begin{equation}\label{dss}
	\hat{\bgamma}_i(\lambda_i) = \argmin_{\bgamma_i}\Bigl[\frac 1n\|\X_i\bar{\bbeta}_i - \X_i\bgamma_i\|^2_2 + \lambda_i\sum_{t=1}^p \frac{|\gamma_{it}|}{|\bar\beta_{i,t}|}\Bigr],
	\end{equation}
	and next chooses the model corresponding to the nonzero coordinates of $\bgamma_i$.  The authors
	\cite{Hahn_Carvalho2015} advocate this method over thresholding, in particular because it may
	better handle multi-collinearity. In genomics applications, such as the eQTL Example (Section~\ref{dataApp2}),
	multi-collinearity is likely strong, in particular between neighbouring genomic locations. 
	Another attractive aspect of (\ref{dss}) is that it only relies on the posterior means, which we have
	shown to be accurately estimated by the variational Bayes approximation.  
	
	In the DSS approach the thresholding in order to obtain models of different sizes is performed through the smoothing parameters
	$\lambda_i$. The authors \cite{Hahn_Carvalho2015}
	propose a heuristic to choose $\lambda_i$ based on the credible interval of the explained
	variation. An alternative is to apply $K$-fold cross-validation based on the squared prediction error:
	\begin{equation}\label{mselambda}
	\text{MSE}(\lambda_i) = \frac1n\sum_{k=1}^K \|\Y_i^k - \X_i^k\hat{\bgamma}_i^{-k}(\lambda_i)\|^2_2,
	\end{equation}
	where superscript $k$ refers to the observations used as test sample in fold $k=1,\ldots, K$, 
	and $-k$ to the complementary training sample used to calculate $\hat{\bgamma}_i^{-k}(\lambda_i)$,
	by \eqref{dss} with $\X_i^{-k}$ and $\bar{\bbeta}_i^{-k}$ replacing $X_i$ and $\bar\bbeta_i$.
	Again we throughout fix the hyperparameters of the priors  to the ones resulting from the variational Bayes algorithm on the entire data set.
	We have found that the function $\lambda_i\mapsto\text{MSE}(\lambda_i)$ can be flat, 
	which, to some extent, is a `by-product' of the strong shrinkage properties of horseshoe prior.
	(Given a sparse true vector, many posterior means $\bar{\beta}_{i,r}$
	will be close to zero, which renders the DSS solution (\ref{dss}) less dependent on $\lambda_i$.)
	To overcome this, and because we prefer sparser models, we used the maximum value 
	of $\lambda_i$ for which the MSE is within 1 standard error of the minimum of the mean square errors.
	
	In the next sections, if not specified, selection should be understood as the first scheme based on thresholding.

	\section{Simulations}
	\label{modSim}
	
	We performed model-based simulations to compare model \eqref{sMultRed}, referred to as \emph{pInc},
	with the alternative method \emph{pInc2}, in which there is only one parameter $\tau_g$ 
	per group, and their ridge counterpart \emph{ShrinkNet} (\cite{leday}). 
	We refer to the latter paper for comparisons of \emph{ShrinkNet} to other competing methods.
	\emph{ShrinkNet} was indeed shown in \cite{leday} to outperform the \emph{graphical lasso} \cite{Friedman2008}, the \emph{SEM Lasso} \cite{meinshausen2006} and the \emph{GeneNet} \cite{Schaefer2006} using exactly the same simulated data below.
	As \emph{ShrinkNet} was developed for network reconstruction only and does not
	incorporate prior knowledge, we initially considered the setup of network reconstruction in Section~\ref{App1}
	and set $G=1$ in \eqref{sMultRed}.  Next we compared \emph{pInc} and \emph{pInc2} in the same
	network recovery context, but incorporating prior information.  Finally, we compared the accuracy and
	computing time of our variational Bayes approximation approach with Gibbs sampling-based strategies
	\cite{Bhattacharya2016}. 
	
	%Further comparison of these and other models on real data can be found in Section~\ref{dataApp}.
	
	\subsection{Model-based simulation}
	\label{modSim1}
	
	We generated data $Y^1,\ldots, Y^n$ according to \eqref{EqData}, for $p = 100$ and
	$n \in \{10, 100, 200, 500\}$ to reflect high and low-dimensional designs.  We generated precision
	matrices $\Omega_p$ corresponding to {\emph{band}}, {\emph{cluster}} and {\emph{hub}} network topologies
	\cite{Zhao2012,leday} from a G-Wishart distribution \cite{MohammadiWit2015} with scale matrix equal to the
	identity and $b=4$ degrees of freedom. 
	%(see SM for details).
	
	The performance of the methods was investigated using average $\ell_1$ errors
	$\|\hat{\beta}_0 - {\beta} _0 \|_1$ and $\|\hat{\beta}_1 - {\beta} _1 \|_1$ across $50$
	replicates of the experiment. Here $\beta_1$ (or $\beta_0$) is the vector consisting of all nonzero (or zero) values of 
	the partial correlation matrix $-{(\Omega_p)_{it}}/{(\Omega_p){ii}}$ except the diagonal elements,
	and $\hat{\beta}_1$ (or $\hat{\beta}_0$) is the vector consisting of the corresponding 
	posterior means.
	
	% % % % %
	\begin{table}
		\begin{center}				
			\begin{tabular}{|c | c | c | c | c |}
				\hline
				& Sample size	  & {{\emph{ShrinkNet}}}  & {{\emph{pInc2}} } & {{\emph{pInc}} } \\ \hline
				\multirow{4}{*}{Band}  & $n=10$      & 25.26    &  1.77      & 0.66  \\  
				&  $n=100$   & 265.89  &  180.42  & 78.46      \\  
				&  $n=200$   & 291.33  &  113.12  & 121.29   \\  
				&  $n=500$   & 251.47  &   81.38   & 150.62     \\	\hline
				
				\multirow{4}{*}{Cluster}  & $n=10$    & 15.74   & 0.71      & 0.51    \\  
				&  $n=100$  & 224.89 & 186.88  & 39.97    \\ 
				&  $n=200$  & 259.94 & 130.70  & 98.77  \\ 
				&  $n=500$  & 231.33 &   82.82  & 107.58 \\	\hline
				
				\multirow{4}{*}{Hub}  & $n=10$      & 7.44      & 0.28    & 0.34  \\  
				&  $n=100$   & 155.87  & 8.70    & 47.85  \\  
				&  $n=200$   & 154.63  & 12.65  & 84.46   \\   
				&  $n=500$   & 132.50  & 21.51  & 106.31     \\	\hline
				
			\end{tabular}
		\end{center}
		\caption{Average $l_1$ error, $\|\hat{\beta}_0 - {\beta} _0 \|_1$ across $50$ simulation replicates with sample size $n\in \{10, 100, 200, 500\}$ and $p=100$. The precision matrices used correspond respectively to \it{Band}, \it{Cluster} and \it{Hub} structure.}\label{L1error0}
	\end{table}
	
	\begin{table}
		\begin{center}				
			\begin{tabular}{|c | c | c | c | c |}
				\hline
				& Sample size	  & {{\emph{ShrinkNet}}}  & {{\emph{pInc2}} } & {{\emph{pInc}} } \\ \hline
				\multirow{4}{*}{Band}  & $n=10$      & 220.15  & 220.55 & 221.92  \\  
				&  $n=100$   & 162.58  & 112.01 & 134.82  \\ 
				&  $n=200$   & 124.01  & 66.08   &   65.66    \\
				&  $n=500$   &  72.51   & 29.08   &   29.25    \\	\hline
				
				\multirow{4}{*}{Cluster}  & $n=10$    & 288.86  & 288.64 & 289.44  \\  
				&  $n=100$  & 254.03  & 160.05 & 217.48  \\  
				&  $n=200$  & 215.88  &   75.24 & 86.54  \\   
				&  $n=500$  & 133.22  &  27.99  & 29.95    \\	\hline
				
				\multirow{4}{*}{Hub}  & $n=10$     & 40.25  & 39.34 &  40.52   \\  
				&  $n=100$   & 24.14  & 15.39 &  13.99   \\  
				&  $n=200$   & 17.58  &   9.42 &  8.65   \\   
				&  $n=500$   & 12.54  &   5.42 &  5.26    \\	\hline
				
			\end{tabular}
		\end{center}
		\caption{Average $l_1$ error, $\|\hat{\beta}_1 - {\beta} _1 \|_1$ across $50$ simulation replicates with sample size $n\in \{10, 100, 200, 500\}$ and $p=100$. The precision matrices used correspond respectively to \it{Band}, \it{Cluster} and \it{Hub} structure.}\label{L1error1}
		
	\end{table}
	
	The results are displayed in Tables~\ref{L1error0} and~\ref{L1error1}.
	Both methods \emph{pInc} and \emph{pInc2} outperform \emph{ShrinkNet} in all simulation setups. 
	For the nonzero parameters (`signals')  \emph{pInc} and \emph{pInc2} are on par,
	but for the zero parameters \emph{pInc} outperforms \emph{pInc2} for 
	small $n$ in the Band and Cluster topologies,  but when $n$ increases and in the Hub topology
	this turns around. 
	
	Somewhat worrisome is that the performance of all methods on the zero parameters initially seems
	to suffer from increasing sample size $n$. The empirical Bayes choice of
	shrinkage level clearly favours strong shrinkage for small $n$, giving good performance
	on the zero parameters, but relaxes this when the sample size increases. Thus the better
	performance for increasing $n$ on the nonzero parameters is partly offset by 
	a decline in performance on the zero parameters. 
	This balance between zero and nonzero parameters 
	is restored only for relatively large sample sizes. A similar
	phenomenon was observed in \cite{vdeWiel18}.
	
	Tables~\ref{L1error0n} and~\ref{L1error1n} compare the performance of \emph{pInc} and
	\emph{pInc2} when prior information is available (both with sample size $n=10$). The prior information
	consists either of the correct adjacency matrix $P$ for the network (i.e.\ $P_{i,t}=1$ if $\Omega_{i,t}\not=0$
	and $P_{i,t}=0$ otherwise), or an adjacency matrix in which 50 \% of the positive entries are
	correct. The latter matrix was obtained by swapping a random selection of half the 1s in the
	correct adjacency matrix with a random selection of equally many 0s. 
	The tables shows that \emph{pInc} usually outperforms \emph{pInc2},
	the zero parameters in the \emph{Hub} case with $50\%$ true edge prior knowledge being the
	only significant exception.
	
	\begin{table}
		\begin{center}				
			\begin{tabular}{|c | c | c | c | c |}
				\hline
				& Quality of prior Info &	 {{\emph{pInc2}} } & {{\emph{pInc}} } \\ \hline
				\multirow{2}{*}{Band}  & True model      & 6.90  & 0.68  \\  
				&  50\% true edge info   & 6.66  & 5.30   \\  \hline
				%									  	   &  edge info all wrong    &   &     \\   \hline
				
				\multirow{2}{*}{Cluster}  & True model                & 4.96 & 0.60    \\  
				&  50\% true edge info  & 3.25  & 3.28    \\  \hline
				%  		&  edge info all wrong   &  &  \\   \hline
				
				\multirow{2}{*}{Hub}  & True model                & 0.22 & 0.27   \\  
				&  50\% true edge info & 0.46  & 5.88    \\  \hline
				%  		&  edge info all wrong  &  &  \\   \hline
				
			\end{tabular}
		\end{center}
		\caption{Average $l_1$ error, $\|\hat{\beta}_0 - {\beta} _0 \|_1$ across $50$ simulation replicates with sample size $n = 10$ and $p=100$. Qualities of prior information correspond to true model and $50\%$ true edge information.}\label{L1error0n}
	\end{table}

	\begin{table}
		\begin{center}				
			\begin{tabular}{|c | c | c | c |}
				\hline
				& Quality of prior Info &	 {{\emph{pInc2}} } & {{\emph{pInc}} } \\ \hline
				\multirow{2}{*}{Band}  & True model               & 216.25  & 209.48    \\  
				&  50\% true edge info   & 219.57  & 217.39    \\  \hline
				
				\multirow{2}{*}{Cluster}  & True model          & 285.72  &  281.21   \\  
				&  50\% true edge info  & 286.98  &  286.73  \\  \hline
				
				\multirow{2}{*}{Hub}      & True model                & 29.40 &   27.55  \\  
				&  50\% true edge info   & 37.79  &   34.60  \\  \hline
				
			\end{tabular}
			\caption{Average $l_1$ error, $\|\hat{\beta}_1 - {\beta} _1 \|_1$ across $50$ simulation replicates with sample size $n =10$ and $p=100$. Qualities of prior information correspond to true model and $50\%$ true edge information.}\label{L1error1n}
		\end{center}
	\end{table} 
	
	To study the performance of the different methods on model selection we computed ROC curves, showing the
	true positive rate (TPR) and false positive rate (FPR) as a function of the threshold on the test
	statistic \eqref{EqDefkappa} for inclusion of a parameter in the model.  
	Figure~\ref{rocP} shows that in the absence of prior information \emph{pInc2} performs best, closely followed
	by \emph{pInc}, and both methods outperform \emph{ShrinkNet}.
	Given either correct or 50\% correct information 
	\emph{pInc} is the winner, as seen in  Figure~\ref{rocP2}, which also
	shows the usefulness of incorporating prior information.
	These findings are consistent with the results on estimation presented in 
	Tables~\ref{L1error0}--\ref{L1error1n} in their ordering of \emph{pInc} above
	\emph{pInc2} in the case of availability of external information.
	
	\begin{figure}
		\begin{center}
			\includegraphics[scale=0.6]{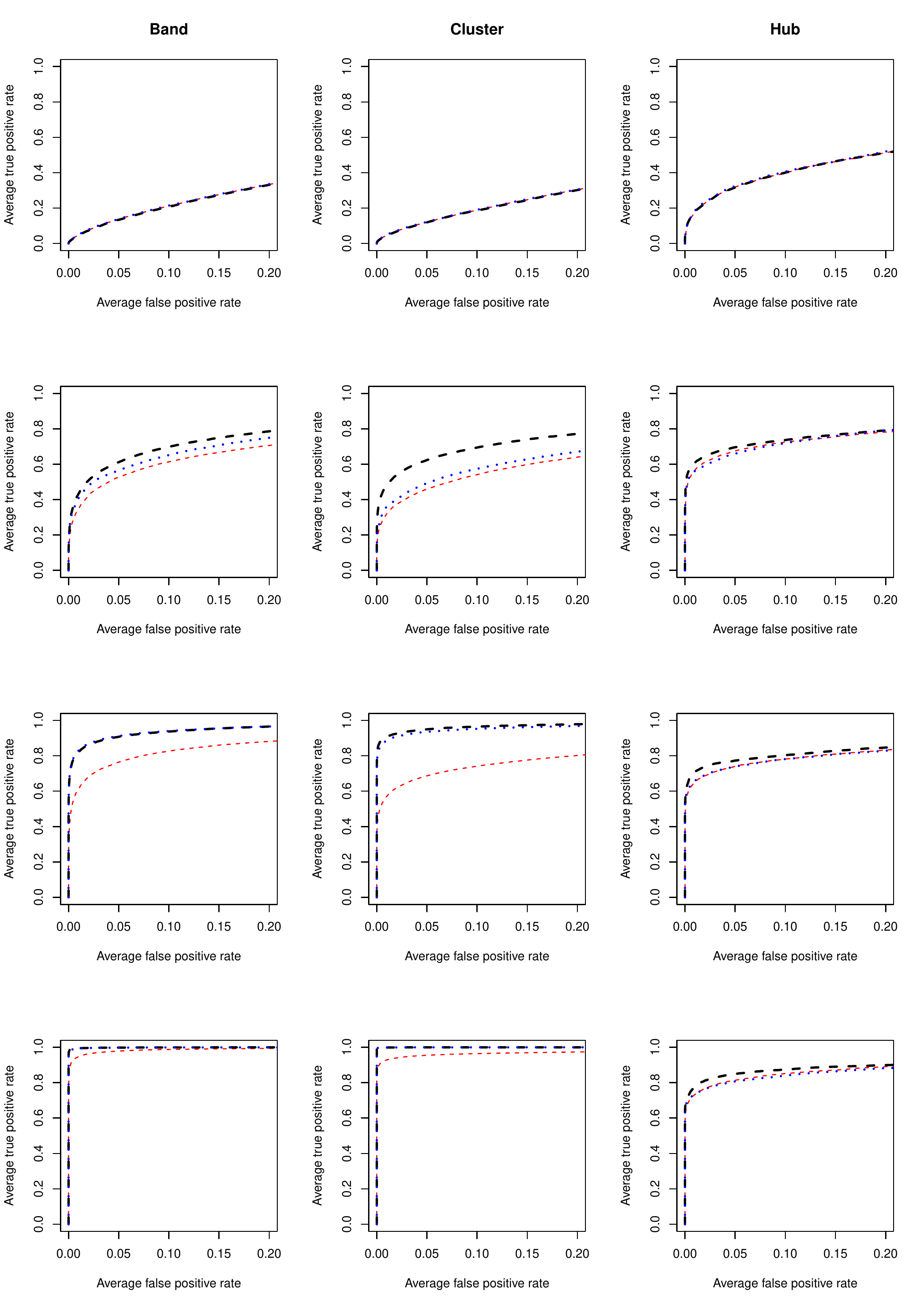}
		\end{center}
		\caption{Average partial-ROC curves comparing performance of ShrinkNet (dashed red), pInc2 (dashed black) and pInc (dashed blue) where $n \in \{10, 100, 200, 500\}$ and $p=100$. First, second, third and fourth rows correspond respectively to the performances of $n=10$, $n=100$, $n=200$ and $n=500$. }\label{rocP}
	\end{figure}
	
	\begin{figure}[ht]
		\begin{minipage}[c]{0.33\linewidth}
			\centering
			\includegraphics[width=1.0\textwidth]{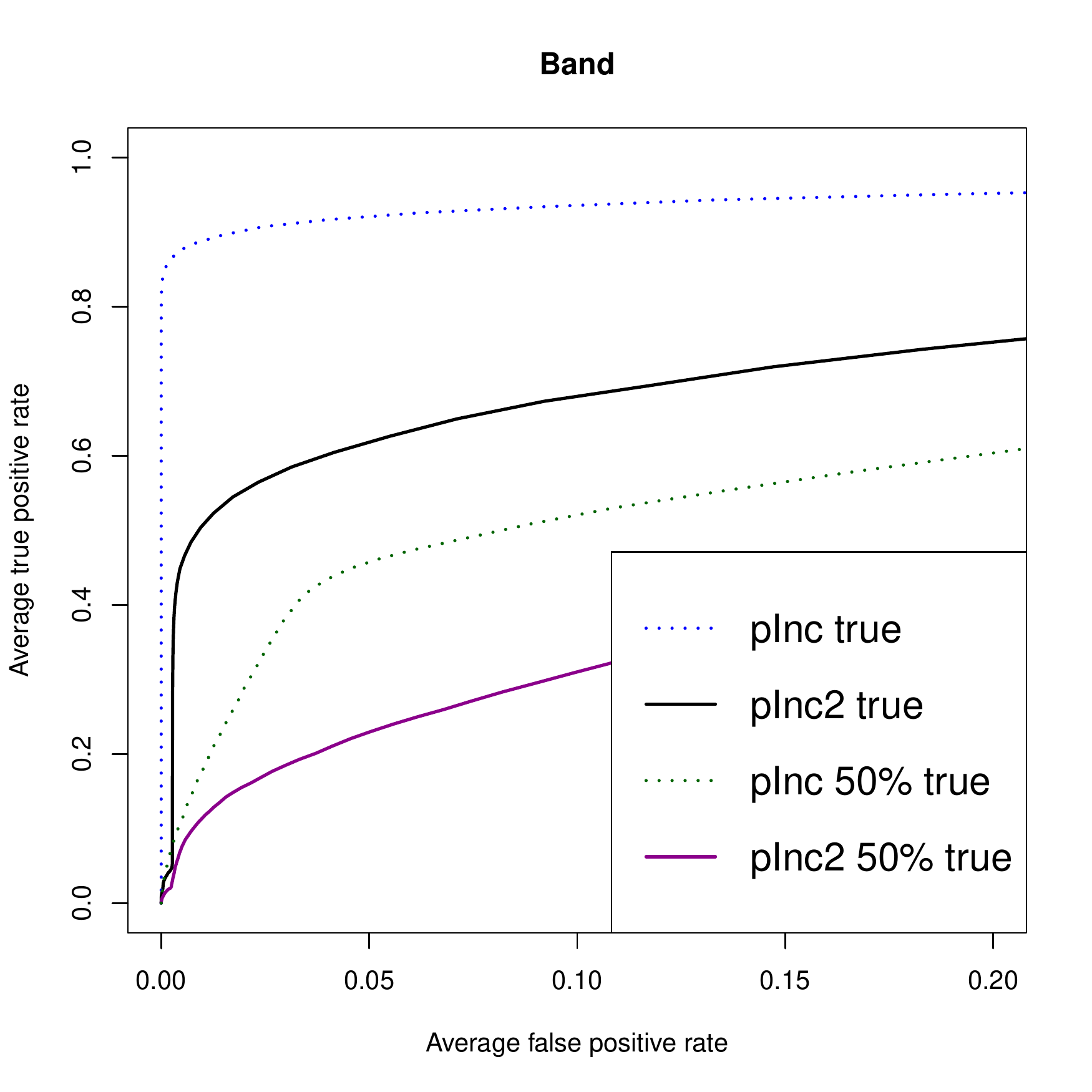}
		\end{minipage}\hfill
		\begin{minipage}[c]{0.33\linewidth}
			\centering
			\includegraphics[width=1.0\textwidth]{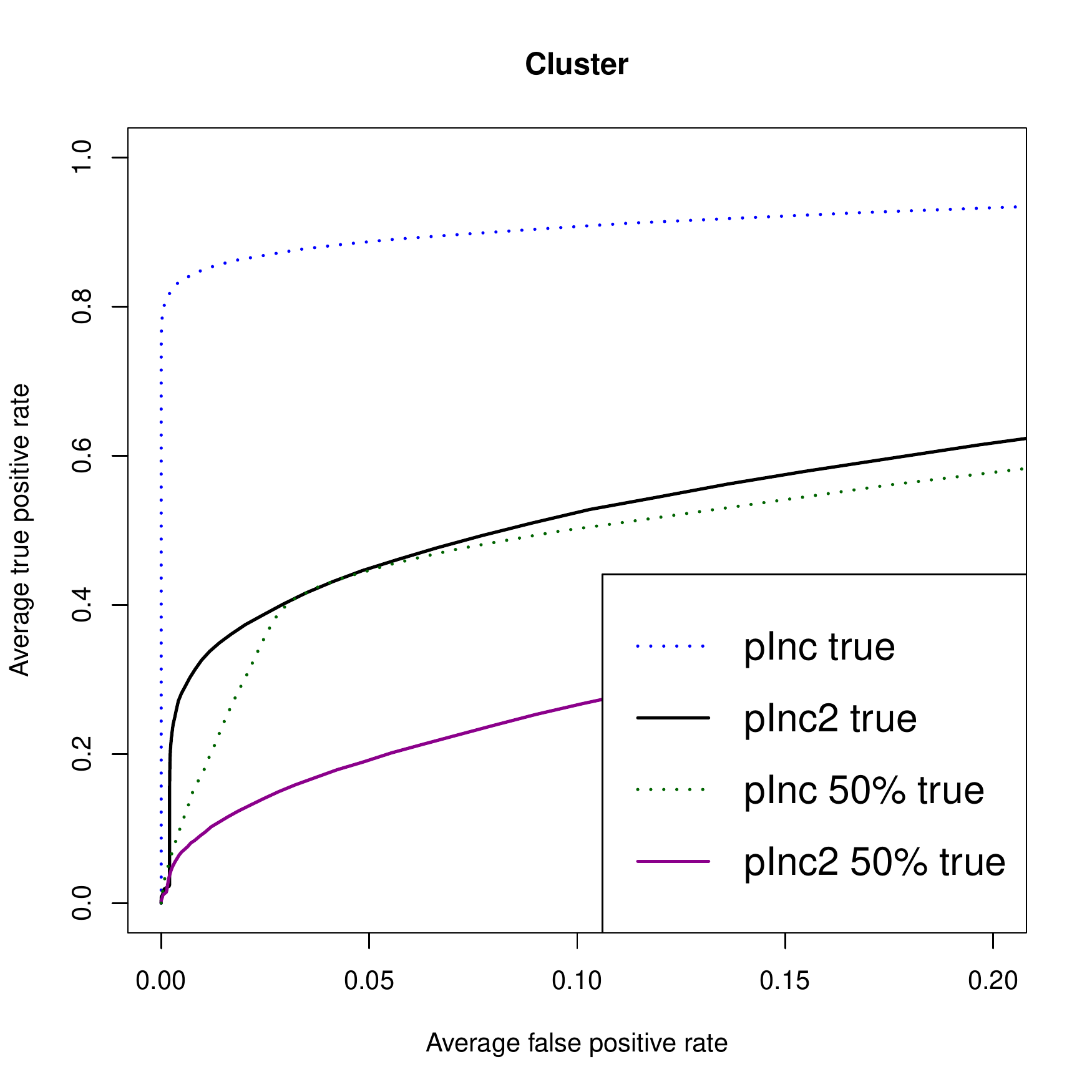}
		\end{minipage}\hfill
		\begin{minipage}[c]{0.33\linewidth}
			\centering
			\includegraphics[width=1.0\textwidth]{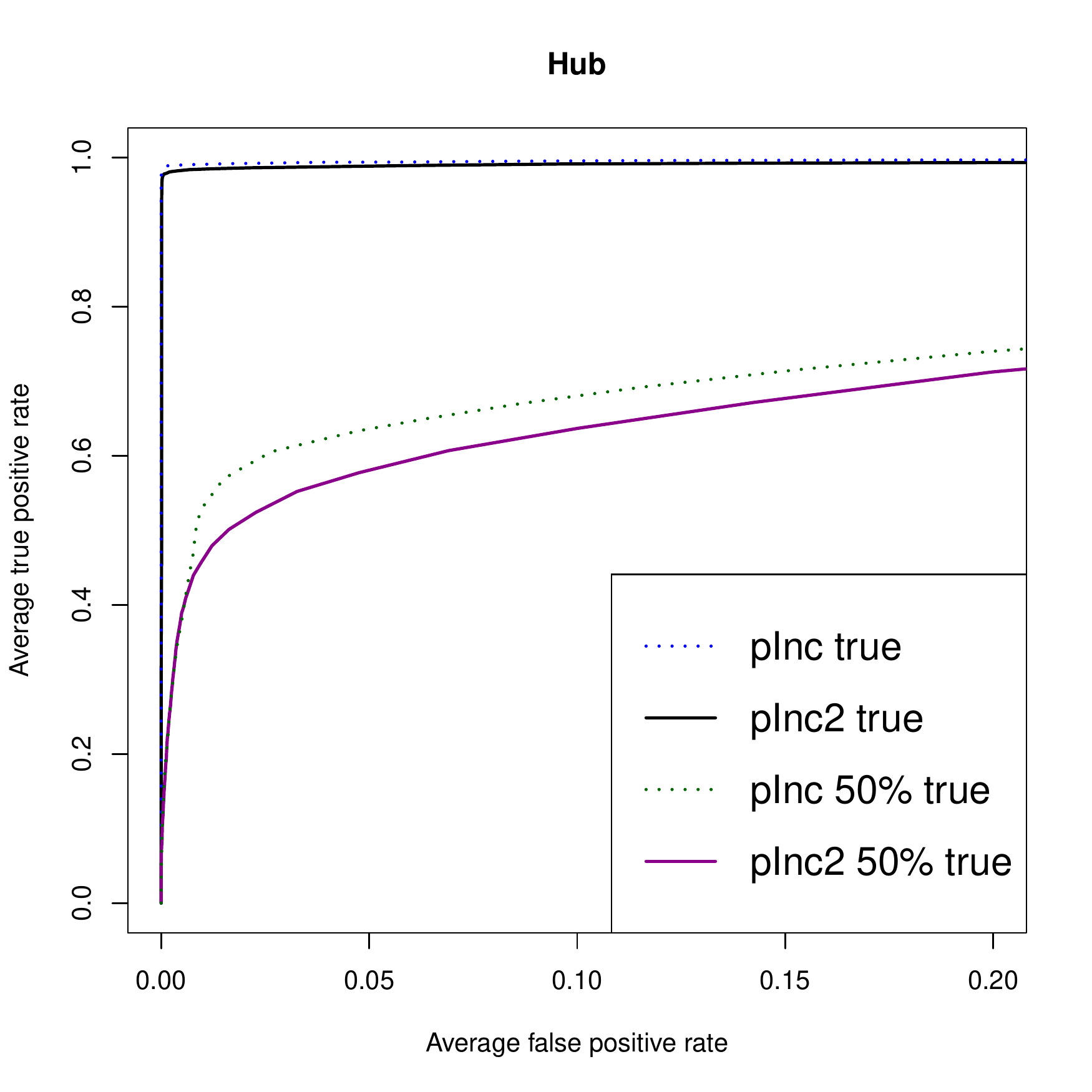}
		\end{minipage}\hfill
		\caption{Average partial-ROC curves comparing performance of pInc using perfect prior information
			(dashed blue), pInc2 using perfect prior information (black), pInc using $50\%$ true edge information (dashed dark green) and pInc2 using $50\%$ true edge information (darkmagenta). Sample size and network dimension 
			were $n=10$ and $p=100$.}\label{rocP2}
	\end{figure}

	Figure~3 in the supplementary material displays histograms of the EB estimates of prior parameter/hyperparameter $\tau^2$'s by pInc (TauSq) and pInc2 (TauSq2) across the $50$
	simulation replicates. The initial hyperparameter value for \emph{pInc2} was set to $0.05$.  The
	figure shows that the estimated parameters are bigger (hence less shrinkage) when the sample size is
	larger. Furthermore, for a fixed sample size the estimates are reasonably stable, the quotient of
	the largest and smallest across the 50 replicates being below a small constant.

	\subsection{Variational Bayes vs MCMC}
	\label{VBvsGibbs}
	
	We investigated the quality of the variational approximation by comparing it 
	to the output of a long MCMC run. As we only use the univariate marginal
	posterior distributions of the regression parameters for model selection, we focused on these.
	We ran a simulation study with a single regression equation (say $i=1$) with
	$n=p=100$, and compared the variational Bayes estimates of the marginal densities with the
	corresponding MCMC-based estimates. We sampled $n=100$ independent replicates from a
	$p=100$-dimensional normal distribution with mean zero and $(p\times p)$-precision matrix
	$\Omega_p$, and formed the vector $Y_1$ and matrix $X_1$ as indicated in Section \ref{App1}.  The
	precision matrix was chosen to be a \emph{band matrix} with lower and upper bandwidths equal to 4,
	% $b_l$ equal to the upper bandwith $b_u$
	thus a band of total width 9.  For both the variational approximations and the MCMC method we used 
	prior hyperparameters $c=d=0.001$ and prior hyperparameters $(\hat{a}, \hat{b})$ 
	(resp. $\hat{\tau}^2$ for {\emph{pInc2}})
	fixed to the values set by the \emph{global} empirical Bayes method described
	in Section \ref{EB}.  The MCMC iterations were run $nIter=4\times10^4$ times without thinning,
	after which the first $nBurnin=2\times10^4$ iterates were discarded \cite{Polson2014}.  
	Tables~\ref{vbGibbs} and~\ref{vbGibbs2} summarize the comparison.
	%{\color{red} SAY SOMETHING ABOUT THE SHAPE AND WIDTH OF THE VARIATIONAL BAYES DISTRIBUTION. INCLUDE SOME PICTURES, IF NECESSARY IN THE SM.}
	
	The correspondence between the two methods is remarkably good. 
	The posterior means obtained from the variational method 
	are even slightly better as estimates of the true parameters than the ones from the MCMC method,
	in terms of $\ell_1$-loss. With respect to computing time the variational method was
	vastly superior to the MCMC method, which would hardly be feasible even for $n=p=100$.  
	
	\begin{table}
		\begin{tabular}{|c|c|c|} 
			\hline
			&   \vtop{\hbox{\strut average $l_1$ loss $||\hat{\beta}_1-\beta_1||_1$ }\hbox{\strut in 20 replications ($i=1$) }}     &  \vtop{\hbox{\strut computing time needed }\hbox{\strut for all the 100 regressions}}  \\  \hline
			{\emph{pInc}}        &  1.41    &    58 sec       \\ \hline
			
			MCMC method	&  2.22    &    13h 15 min      \\\hline
		\end{tabular}
		\captionof{table}{Performance comparison between pInc and the MCMC method.}\label{vbGibbs}
		\vspace{0.5cm}
		
		\begin{tabular}{|c|c|c|} 
			\hline
			&   \vtop{\hbox{\strut average $l_1$ loss $||\hat{\beta}_1-\beta_1||_1$ }\hbox{\strut in 20 replications ($i=1$) }}     &  \vtop{\hbox{\strut computing time needed }\hbox{\strut for all the 100 regressions}}  \\  \hline
			{\emph{pInc2}}     &  2.25    &    1 min 48 sec    \\ \hline
			
			MCMC method	 &  3.03    &    13h 19 min       \\\hline
		\end{tabular}
		\caption{Performance comparison between pInc2 and the MCMC method.}\label{vbGibbs2}
	\end{table}

	\section{Applications}
	\label{dataApp}
	
	We applied the methods to two real datasets, both as illustration.
	% and to compare accuracy, holding
	%out a test sample to evaluate performance of the methods.

	\subsection{Reconstruction of the apoptosis pathway}
	\label{dataApp1}
	
	The cells of multicellular organisms possess the ability to die by a process called programmed cell
	death or \emph{apoptosis}, which contributes to maintaining tissue homeostasis.
	Defects in the apoptosis-inducing pathways can eventually lead to expansion of a population of
	neoplastic cells and  cancer \cite{Hanahan2000,Krammer1998,Igney2002}.  
	Resistance to apoptosis may increase the escape of tumour cells from surveillance by the immune
	system. Since chemotherapy and irradiation act primarily by inducing apoptosis,
	defects in the apoptotic pathway can make cancer cells resistant to therapy. For this reason
	resistance to apoptosis remains  an important clinical problem. 
	
	In this section we illustrate the power of our method in reconstructing the apoptosis network from
	lung cancer data \cite{landi} from the Gene Expression Omnibus (GEO). The data comprises $p = 84$
	genes, consisting of $n_1=49$ observations from normal tissue and $n_2=58$ observations from tumor
	tissue, hence $n=107$ observations in total.  We fitted \emph{pInc} on the tumor data, using the
	data on normal tissue as prior knowledge. To the latter aim we fitted \emph{pInc} to the normal data
	with a single group $G=1$, and applied the model selection procedure of Section~\ref{varSel1} to
	create an array $P$ of incidences, which served as input when fitting \emph{pInc} on the tumor
	data. The idea is that, while tumors and normal tissue may differ  strongly in terms of mean
	gene expression, the gene-gene interaction network may be relatively more stable.
	
	When fitting the \emph{pInc} model with the two groups (gene interaction absent or present in normal
	tissue), we observed a huge difference in the empirical Bayes estimates of the hyperparameters
	governing the priors of the parameters $\tau^{-2}$ of the two groups, namely prior mean
	${\hat{a}_0}/{\hat{b}_0}= 8476.97$ for absent and ${\hat{a}_1}/{\hat{b}_1}=3.70$ for present in
	the prior network.  This strongly indicates the relevance of the prior knowledge
	\cite{Kpogbezan2017}, so that superior performance of \emph{pInc} in the reconstruction can be
	expected.
	
	Figure~\ref{spApopt} displays the reconstructed undirected network by \emph{pInc} with Selection procedure \ref{varSel1}.
	A total number of 27 edges were found with various edge strengths.
	The ten most significant edges in decreasing order were:
	PRKACG $\leftrightarrow$ FASLG,
	MYD88 $\leftrightarrow$ CSF2RB,
	PIK3R2 $\leftrightarrow$ CHUK,
	TNFRSF10B $\leftrightarrow$ CHP1,
	PRKAR1B $\leftrightarrow$ AKT2,
	PIK3R2 $\leftrightarrow$ NGF,
	TRAF2 $\leftrightarrow$ BAX,
	TNF $\leftrightarrow$  IL1B,
	PRKAR2B $\leftrightarrow$ AKT3, and
	TRAF2 $\leftrightarrow$ PIK3R2.
	%
	%BIRC2 $\leftrightarrow$ CHUK,
	%FAS $\leftrightarrow$   CSF2RB,
	%MYD88 $\leftrightarrow$ CSF2RB,
	%CASP3 $\leftrightarrow$ CSF2RB,
	%PIK3CA $\leftrightarrow$ CHUK,
	%CASP7 $\leftrightarrow$ CHUK,
	%PRKAR1B $\leftrightarrow$ PRKAR1A,
	%TNFSF10 $\leftrightarrow$ PIK3R1,
	%PPP3CB $\leftrightarrow$ CSF2RB and
	%PRKAR1A $\leftrightarrow$ APAF1.
	
	Node degrees varied from 0 to 4 with PIK3R2 and PRKAR1A yielding the highest degree 4,
	followed by TRAF2 having degree 3, and CHUK, CHP1, BIRC3, FAS, IL1B and NFKBIA having each degree 2. 
	%A table summarizing the degree distribution can be found in the SM.
	
	\begin{figure}
		\includegraphics[width=.8\textwidth]{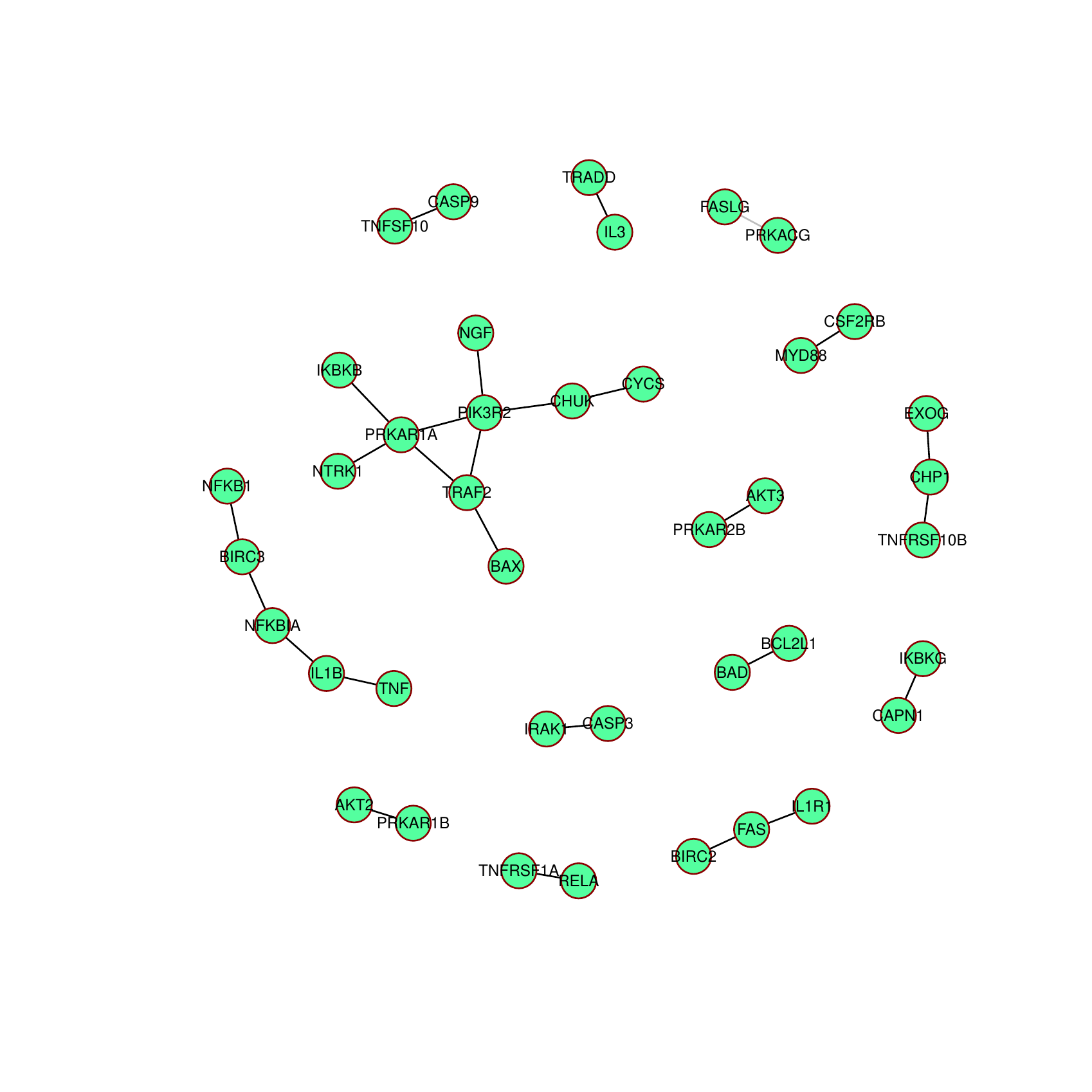} 
		\caption{Apoptosis network reconstructed for the 84 genes by pInc.}\label{spApopt}	
	\end{figure}

	\subsection{eQTL mapping of the p38MAPK pathway}
	\label{dataApp2}
	The p38MAPK pathway is activated \emph{in vivo} by environmental stress and inflammatory
	cytokines, and plays a key role in the regulation of inflammatory cytokines biosynthesis.  Evidence
	indicates that p38MAPK activity is critical for normal immune and inflammatory response
	\cite{Lee1994,Berenson2006,Hwang1997}.  The pathway also plays an important role in cell
	differentiation. Its key role in the conversion of myoblasts to differentiated myotubes during
	myogenic progression has been established by \cite{Li2000,Wu2000,Zetser1999}. More recently,
	\emph{in vivo} studies demonstrated that p38MAPK signalling is a crucial determinant of myogenic
	differentiation during early embryonic myotome development \cite{deAngelis2005}. Finally, the 
	pathway is involved in chemotactic cell migration \cite{Heit2002,Heuertz1999}.
	Lack of p38MAPK function may lead to cell cycle deficiency and
	tumorigenesis, and genetic variants of some genes in the p38MAPK pathway are associated
	with lung cancer risk \cite{Feng2018}.  Studying the pathway in healthy cells may enhance
	understanding the underlying biological mechanism, but has received less attention.
	
	We investigated the association between single nucleotide polymorphisms (SNPs) and the
	genes in the P38MAPK pathway, using GEUVADIS data.  In the GEUVADIS project
	\cite{Lappalainen2013}, 462 RNA-Seq samples from lymphoblastoid cell lines were obtained, while
	the genome sequence of the same individuals is provided by the 1000 Genomes Project. The samples in this
	project come from five populations: CEPH (CEU), Finns (FIN), British (GBR), Toscani (TSI) and Yoruba
	(YRI).  In our analysis we excluded the YRI population samples and samples without expression and
	genotype data, which resulted in a remaining sample size of 373. We also excluded SNPs with minor
	allele frequency (MAF) $< 5\%$. Using a window of $10^5$ bases upstream and $10^5$ downstream of
	every gene, we obtained a total number of 42,054 SNPs for the 99 genes of the pathway belonging to the
	22 autosomes. This resulted in a system of 99 regression models, with dimensions varying from 56 to
	1169. 
	We scaled (per gene) the gene expression data prior to the computations.
	
	Following Section~\ref{App2} we classified the SNPs connected to each gene as located either within the
	gene range or outside, and applied \emph{pInc} with two groups ($G=2$).  
	We observed a big difference in the empirical Bayes estimates of the 
	hyperparameters of the priors of  $\tau^{-2}$:  mean value ${\hat{a}_0}/{\hat{b}_0}= 27,568.76$
	for SNPs outside the gene ranges versus ${\hat{a}_1}/{\hat{b}_1}= 4102.46$ for
	SNPs inside. The prior information is thus clearly relevant, and hence an improved mapping
	by \emph{pInc} can be expected.
	
	We found using Selection procedure \ref{varSel1} the expression levels of 13 out of the 99 genes 
	(genes 15, 40, 48, 50, 51, 61, 75, 78, 85, 86, 93, 96, 98) to be associated 
	with a total number of 50 SNPs from the 42,054 SNPs under consideration. Gene 50
	yielded the highest number 9 of associated SNPs,  followed by gene 40 with 6 SNPs
	and genes 86, 93 and 96 with 5 SNPs each.  
	Figures~\ref{spBetas1} and~\ref{spBetas2} display the estimates of the effect sizes of the SNPs
	(posterior means $\mathbf{E}_{q^*}(\beta_{i,r}\given Y_i)$), green for SNPs outside
	the gene ranges and blue for SNPs within a gene, with `red stars' indicating
	the SNPs that were selected by the selection procedure presented in Section~\ref{varSel1}.
	The 6 largest associations were
	observed within genes 93, 15, 96, 98 and 78 (red vertical lines in Figures~\ref{spBetas1} and~\ref{spBetas2}).
	The active SNPs for all genes, except genes 40 and 50 
	(although for gene 50 only one of the selected SNPs is not within),
	are located inside the gene range.
	This confirms the belief that SNPs falling inside genes are more prone 
	to influence these genes than SNPs outside.
	
	The SNP effects on the remainder 86 ($=99-13$) genes are similar to the ones on gene 1 displayed in Figure~\ref{spBetas2}. The selection obtained by using {\emph{pInc}}-DSS is similar (see SM).

	\begin{figure}
		\includegraphics[width=\textwidth, height=0.95\textheight]{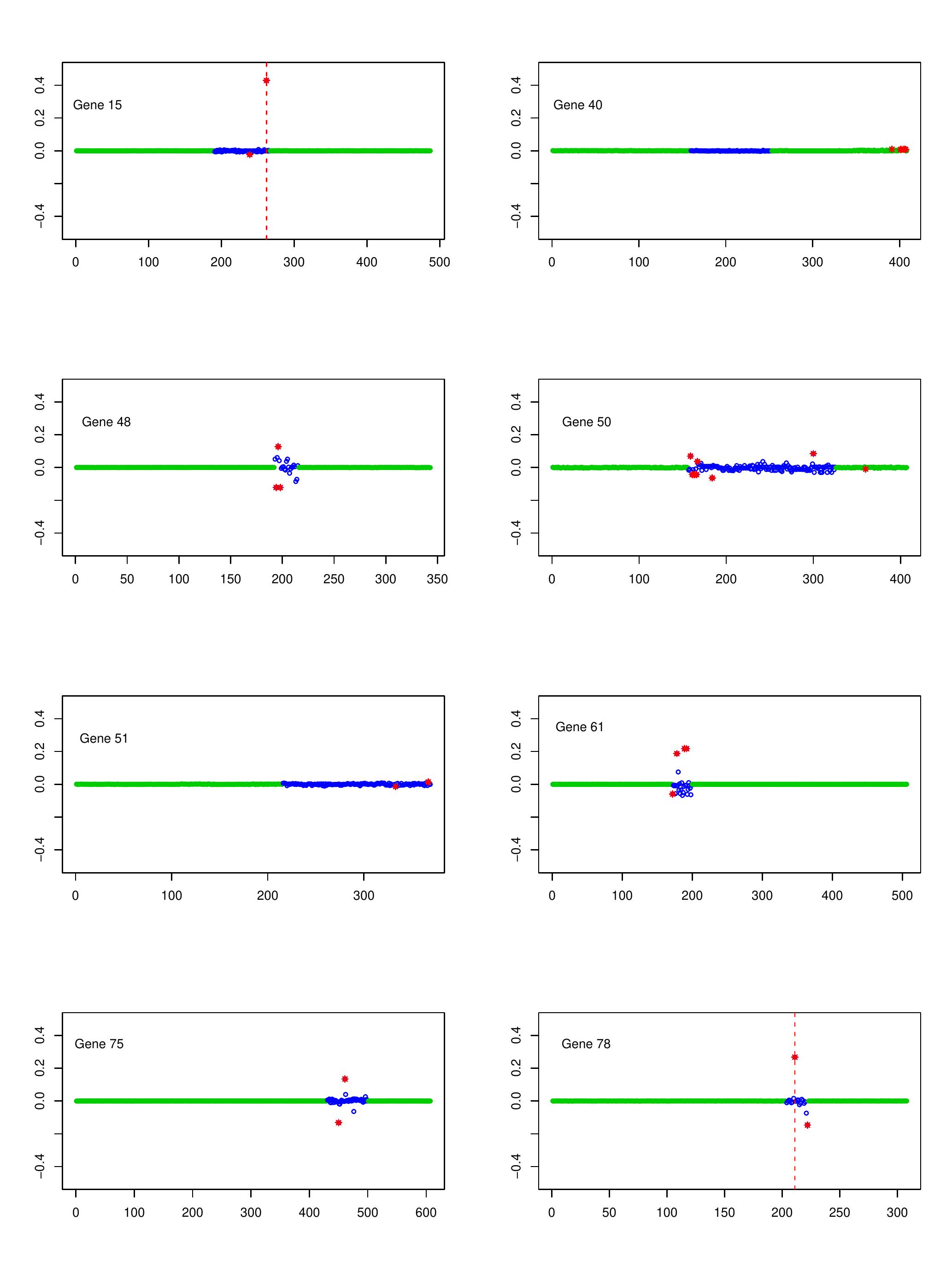} 
		\caption{Estimates of SNP effects on genes 15, 40, 48, 50, 51, 61, 75 and 78 using {\emph{pInc}}. Green dots indicate effects estimates for SNPs outside the gene range and blue dots for SNPs inside the gene range. Red `stars' indicate selected SNP effects.}\label{spBetas1}	
	\end{figure}
	\begin{figure}
		\includegraphics[width=\textwidth, height=0.8\textheight]{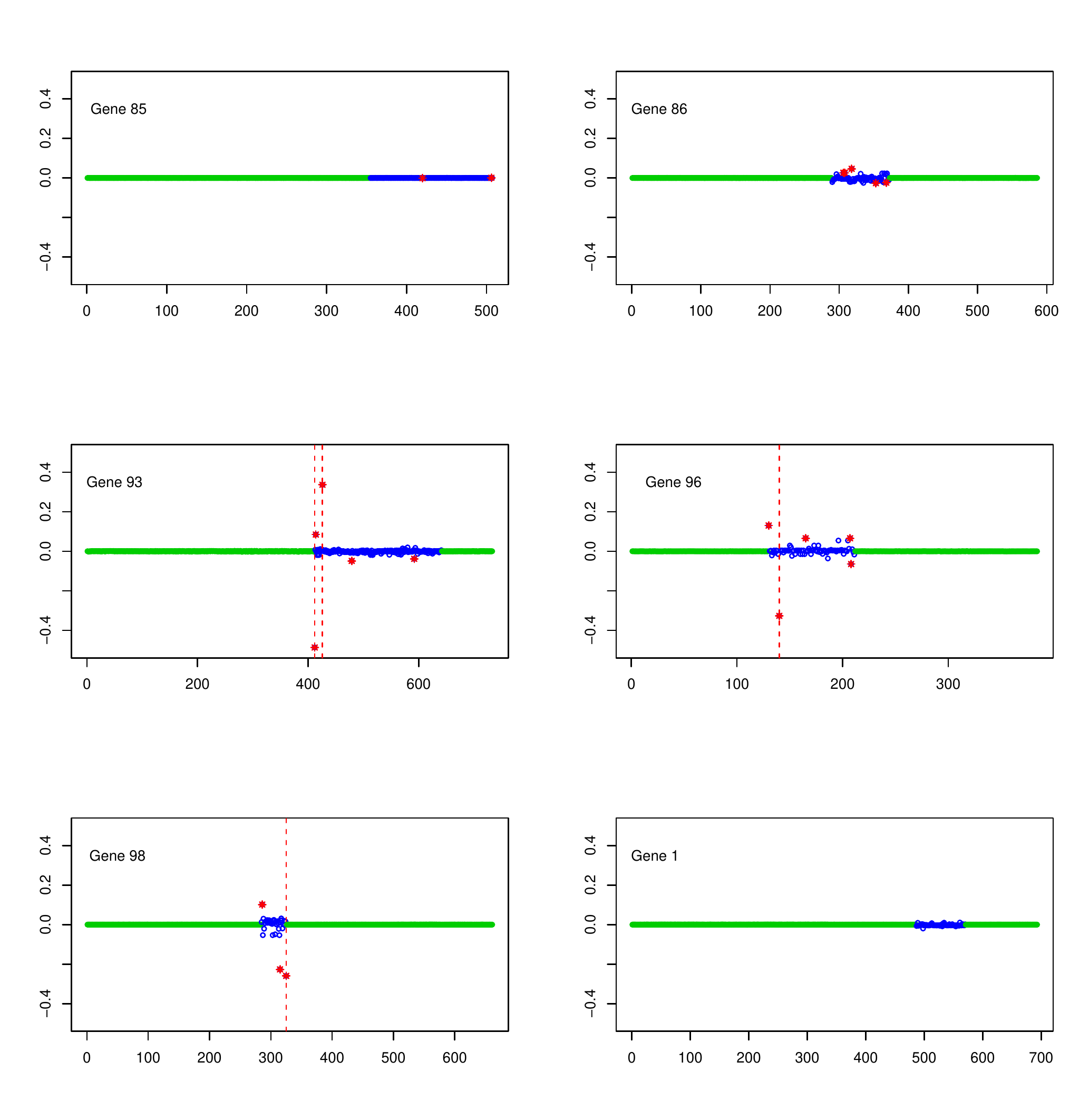} 
		\caption{Estimates of SNP effects on genes 75, 78, 85, 86, 93, 96, 98 and 1 using {\emph{pInc}}. Green dots indicate effects estimates for SNPs outside the gene range and blue dots for SNPs inside the gene range. Red `stars' indicate selected SNP effects.}\label{spBetas2}	
	\end{figure}
	
	\subsubsection{Comparison of {\emph{pInc}}-DSS with lasso}
	
	From the many dedicated methods for eQTL analysis 
	\cite{CaiX2011,LiG2018,Segal2003,Lee2006,KimXing2012}, we chose the lasso
	as a bench-mark to compare the model selection by \emph{pInc} combined with \emph{DSS} \ref{varSel2}. 
	%It is  known that the horseshoe prior may improve variable selection with respect to the lasso \cite{Bhattacharya2016}.  
	Our choice for DSS comes from the interest to investigate whether '\emph{pInc} + lasso' indeed outperforms a direct lasso, as suggested for the basic horseshoe.  
	As a criterion we used predictive performance when using a sparse model restricted to include a
	maximal number of predictor variables (SNPs).  As for the
	lasso, the number of selected variables is easy to control by \emph{pInc-DSS}, because the entire
	trace of the adaptive lasso (\ref{dss}) is available.  To evaluate predictive performance, we used a
	single 2/3-1/3 split of the data, leading to training and test sets of $249$ and $124$ observations,
	respectively.  The lasso was computed using \texttt{GLMnet} by \cite{Friedman2010}, also for
	(\ref{dss}).
	
	The four panels of Figure~\ref{lassovsBSEMHS} report the results for the 
	maximal number of predictor variables set equal to $1$, $3$, $5$, or $10$.
	The vertical axis shows the relative reduction of the MSE on the test set as compared to 
	the empty model (all $\bbeta_i = \mathbf{0}$), defined by
	\begin{equation}\label{msecomp}
	\frac{\text{MSE}_0 - \text{MSE}(m_i)}{\text{MSE}_0},
	\end{equation}
	where $\text{MSE}_0$ is the MSE of the empty model and $\text{MSE}(m_i)$ the MSE of linear model $m_i$.
	This quantity was calculated for all 99 genes in the pathway (horizontal axis), for both
	the lasso (displayed in black) and \emph{pInc-DSS} (displayed in red), large values indicating accurate prediction. 
	The results of the lasso are somewhat more `noisy', likely due to less shrinkage of the (near-)zero
	parameter estimates, and the lasso regularly performs inferior to both the empty model 
	(negative values) and \emph{pInc-DSS}, with  gene 13 an extreme case. 
	For genes with considerable signal w.r.t.\ the empty model (e.g.\ genes 61, 93 and 98), 
	\emph{pInc-DSS} explains much more of the signal than the lasso. 
	This could be explained by less shrinkage of the non-zero parameters by the horseshoe prior, which is designed
	to separate zero and nonzero values. This is illustrated in Figure~\ref{gene98} 
	for gene 98. Gene 50 is the one exception, 
	where lasso beats \emph{pInc-DSS}, in the case of selecting 3 variables.
	%, although {\emph{pInc}}-DSS is very competitive for $p_{\text{sel}} = 1, 5, 10.$
	
	\begin{figure}
		\begin{center}
			\includegraphics[scale=0.4]{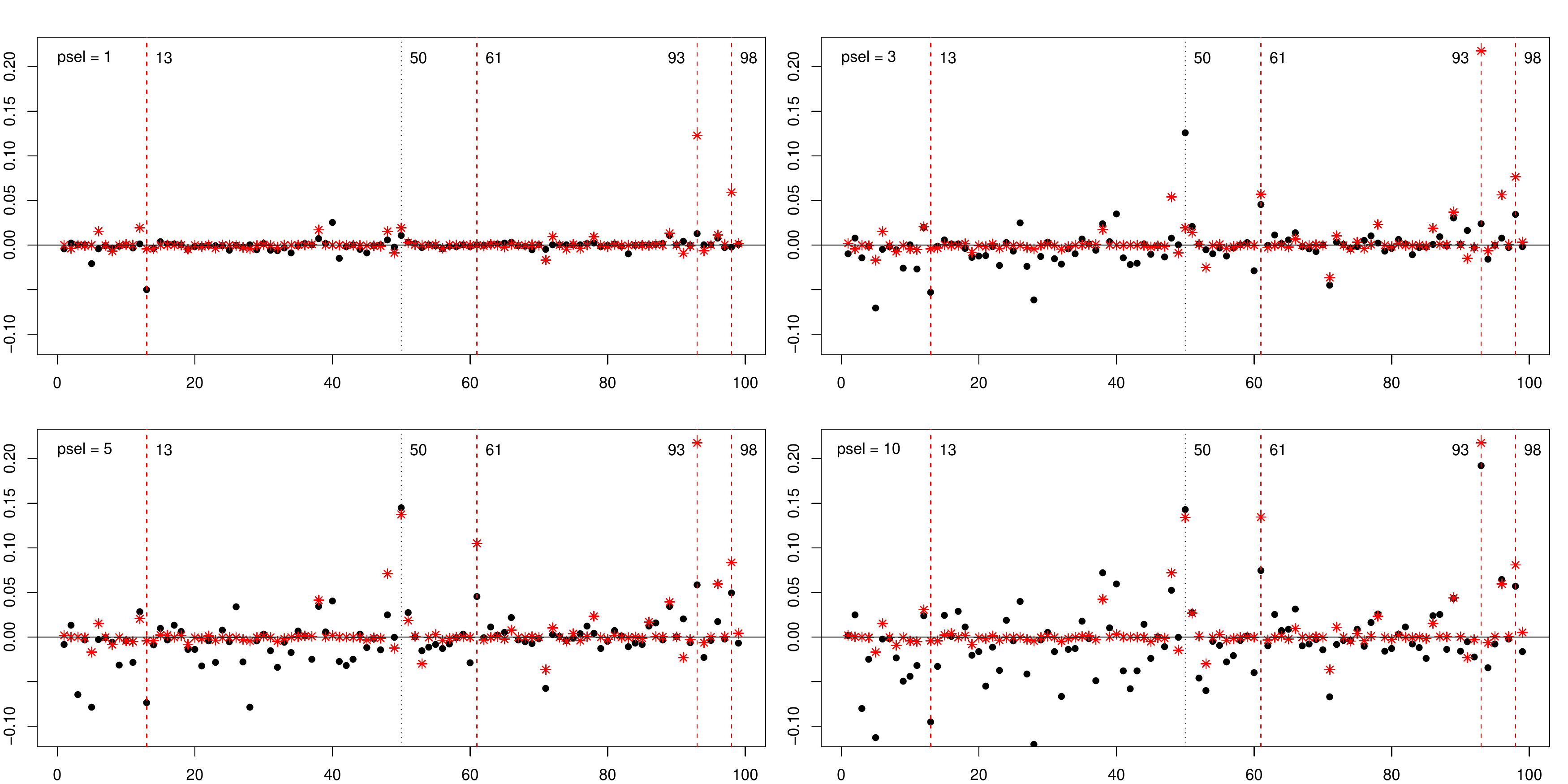}
		\end{center}
		\caption{Relative reduction of MSE (y-axis) for the lasso (black dots) 
			and \emph{pInc-DSS} (red stars) for all genes $i=1, \ldots, 99$ (x-axis) when
			maximal number of variables is fixed to 1, 3, 5, or 10 (clockwise from top-left).
			The genes with the large differences are highlighted by vertical lines}\label{lassovsBSEMHS}
	\end{figure}
	
	\begin{figure}
		\begin{center}
			\includegraphics[scale=0.4]{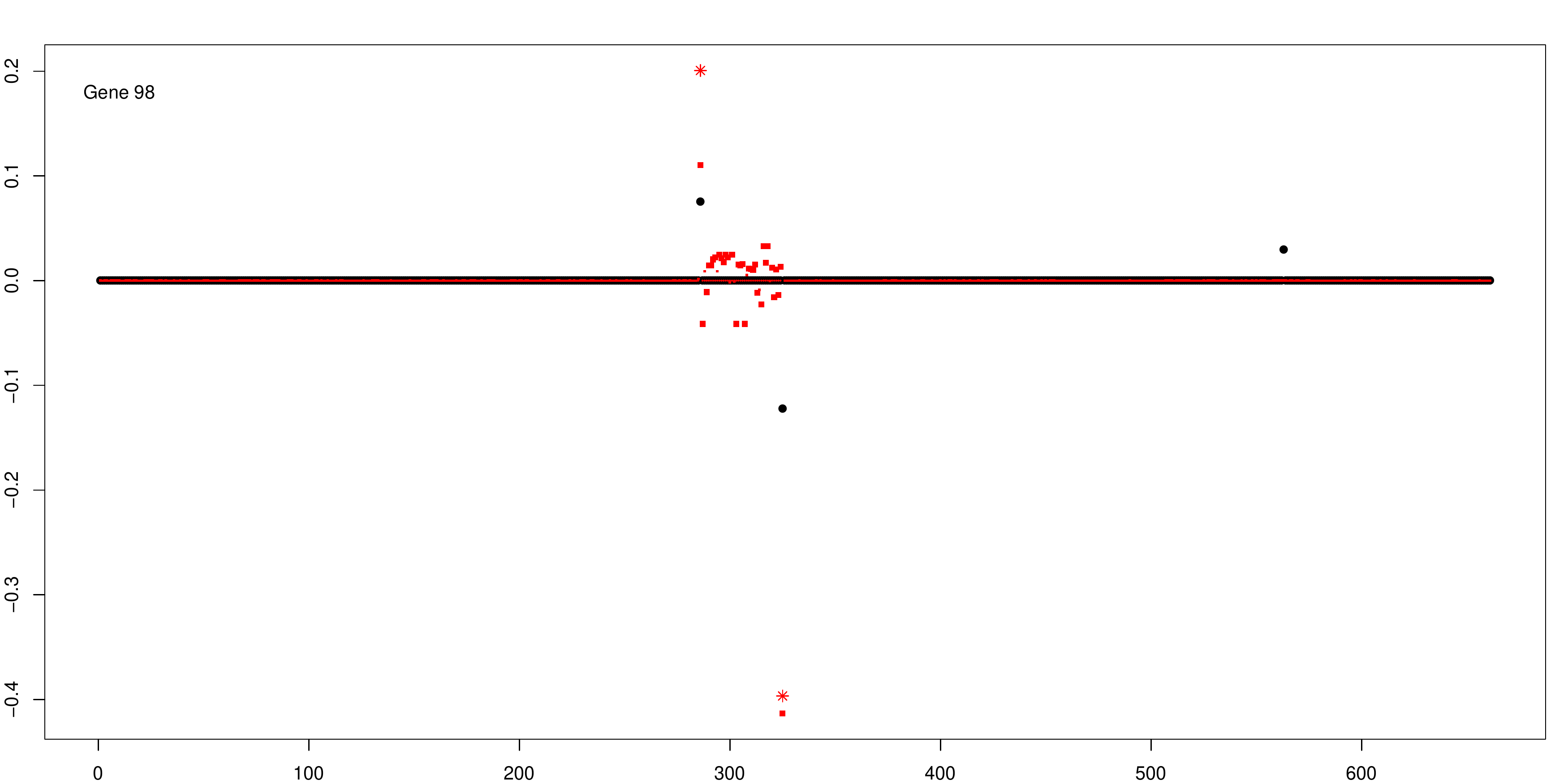}
		\end{center}
		\caption{Estimates of SNP effects on gene 98 using 
			\emph{pInc} (red squares), and \emph{pInc-DSS} (red stars) and the lasso (black dots) with 3 predictor variables
			for the latter two. X-axis denotes SNP index. }\label{gene98}
	\end{figure}

	\section{Conclusion}
	\label{end}
	
	We have introduced a sparse high-dimensional regression approach that can incorporate prior
	information on the regression parameters and can borrow information across a set of similar
	datasets.  It is based on an empirical Bayesian setup, where external information is incorporated
	through the prior, and information is borrowed across similar analyses by empirical Bayes estimation
	of hyperparameters.  We have shown the power of the approach both in model-based simulations 
	of Gaussian graphical models and in real data analyses in genomics. Incorporating the information was shown to
	enhance the analysis, even when the prior information was only partly correct (e.g.\ 50 \% accurate). 
	We explain this by the fact that the empirical Bayesian approach 
	is able to incorporate prior information in a soft manner. Such a flexible approach is particularly
	attractive in high-dimensional situations where the amount of data is  small relative to the number of parameters
	and an increasing amount of prior information is available. 
	
	To make our approach scalable to large models and/or datasets we developed a variational Bayes
	approximation to the posterior distribution resulting from the horseshoe prior distribution. We
	showed the accuracy of the resulting approximation to the marginal posterior distributions of the
	regression parameters by comparison to state-of-the-art MCMC schemes for the horseshoe prior. The
	variational Bayes approach obtained the same (if not better) accuracy at a fraction of CPU time.
	%Our approach can be applied to other priors such as the Laplace (i.e., double exponential) prior which results from assigning exponential priors to the local variance parameters $\lambda_{i,t}^2$'s ($\lambda_{i,t}^2 \sim Exp(\tau^2_{i,g})$) of the regression coefficients $\beta_{i,t}$. In the latter case, the $\tau^2_{i,g}$'s are the group specific parameters and will determine which of the exponential distribution will to used.
	
	We studied two versions of the model, one with a gamma prior on the `sparsity' parameters and one in
	which these parameters are estimated by the empirical Bayes method. We found that the gamma prior is
	preferable when relevant prior knowledge can be used, but in the absence of prior knowledge the
	alternative model may be preferable.

	\bibliographystyle{plain}
	\bibliography{Cbib}	
\end{document}